\documentclass[12pt]{article}
\usepackage{epsfig, epsf, graphicx, subfigure}
\usepackage{pstricks, pst-node, psfrag}
\usepackage{amssymb,amsmath,bm,amsthm}
\usepackage{verbatim,enumerate}
\usepackage{rotating, lscape}
\usepackage{setspace}
\setlength{\oddsidemargin}{-0.125in} \setlength{\topmargin}{-0.5in}
\setlength{\textwidth}{6.5in} \setlength{\textheight}{9in}

\setlength{\textheight}{9in} \setlength{\textwidth}{6.5in}
\setlength{\topmargin}{-40pt} \setlength{\oddsidemargin}{0pt}
\setlength{\evensidemargin}{0pt}

\setlength{\textheight}{9.4in} \setlength{\textwidth}{6.8in}
\setlength{\topmargin}{-71pt} \setlength{\oddsidemargin}{0pt}
\setlength{\evensidemargin}{-6pt} \tolerance=500
\setlength{\topmargin}{-56pt} \setlength{\oddsidemargin}{-6pt}


\def\boxit#1{\vbox{\hrule\hbox{\vrule\kern6pt
          \vbox{\kern6pt#1\kern6pt}\kern6pt\vrule}\hrule}}
\def\refhg{\hangindent=20pt\hangafter=1}
\def\refmark{\par\vskip 2mm\noindent\refhg}

\def\refhg{\hangindent=20pt\hangafter=1}
\def\refmark{\par\vskip 2mm\noindent\refhg}

\def\bse{\begin{eqnarray*}}
\def\ese{\end{eqnarray*}}
\def\be{\begin{eqnarray}}
\def\ee{\end{eqnarray}}
\def\bq{\begin{equation}}
\def\eq{\end{equation}}
\def\bse{\begin{eqnarray*}}
\def\ese{\end{eqnarray*}}


\newcommand{\bY}{\mathbf{Y}}
\newcommand{\bZ}{\mathbf{Z}}

\newcommand{\btheta}{\boldsymbol{\theta}}
\newcommand{\0}{\mathbf{0}}

\newtheorem{theo}{Theorem}

\newtheorem{lem}{Lemma}
\pagenumbering{arabic}

\begin{document}

\thispagestyle{empty} \baselineskip=28pt \vskip 5mm
\begin{center} {\Huge{\bf Efficient Maximum Approximated Likelihood Inference for Tukey's $g$-and-$h$ Distribution}}
\end{center}

\baselineskip=12pt \vskip 10mm

\begin{center}\large
Ganggang Xu\footnote[1]{
\baselineskip=10pt Department of Mathematical Sciences,
Binghamton University, State University of New York,
Binghamton, New York 13902, USA. \\
E-mail: gang@math.binghamton.edu}
and Marc G.~Genton\footnote[2]{
\baselineskip=10pt CEMSE Division,
King Abdullah University of Science and Technology,
Thuwal 23955-6900, Saudi Arabia.\\
E-mail: marc.genton@kaust.edu.sa}
\end{center}

\baselineskip=17pt \vskip 10mm 

\begin{center}
{\large{\bf Abstract}}
\end{center}
Tukey's $g$-and-$h$ distribution has been a powerful tool for data exploration and modeling since its introduction. However, two long standing challenges associated with this distribution family have remained unsolved until this day: how to find an optimal estimation procedure and how to make valid statistical inference on unknown parameters. To overcome these two challenges, a computationally efficient estimation procedure based on maximizing an approximated likelihood function of the Tukey's $g$-and-$h$ distribution is proposed and is shown to have the same estimation efficiency as the maximum likelihood estimator under mild conditions. The asymptotic distribution of the proposed estimator is derived and a series of approximated likelihood ratio test statistics are developed to conduct hypothesis tests involving two shape parameters of Tukey's $g$-and-$h$ distribution. Simulation examples and an analysis of air pollution data are used to demonstrate the effectiveness of the proposed estimation and testing procedures.

\baselineskip=14pt

\par\vfill\noindent
{\bf Some key words:} Approximated likelihood ratio test; Computationally efficient; Maximum approximated likelihood estimator; Skewness; Tukey's $g$-and-$h$ distribution.
\par\medskip\noindent
{\bf Short title}: Tukey's $g$-and-$h$ distribution

\clearpage\pagebreak\newpage \pagenumbering{arabic}
\baselineskip=24.8pt

\section{Introduction}\label{sec:intro}

Datasets with skewed and/or heavy-tailed distributions are typical in many research areas. There have been numerous attempts to search for flexible and practically useful distribution families to model such data in the statistical community; see Jones (2015) for a comprehensive review. An attractive class of distributions introduced by Tukey (1977), and later named Tukey's $g$-and-$h$ distribution, has been extensively studied by many researchers; see, for example, Martinez \& Iglewicz (1984), Hoaglin (1985), Morgenthaler \& Tukey (2000). Let $Z$ be a random variable from a standard normal distribution, $N(0,1)$. A random variable, $Y$, is said to have a Tukey's $g$-and-$h$ distribution if it is obtained through the transformation
\begin{equation}
Y=\xi+\omega \tau_{g,h}(Z), \label{TGHdef}
\end{equation}
where $\xi \in \Bbb{R}$ is a location parameter, $\omega>0$ is a scale parameter, and
\begin{equation}
\tau_{g,h}(z)=g^{-1}\{\exp(gz)-1\}\exp(hz^2/2) \label{taugh}
\end{equation}
 is a one-to-one monotone function of $z \in \Bbb{R}$ for $h\geq 0$, $g \in \Bbb{R}$. When $g=0$, we use the customary definition of $\tau_{0,h}(z)=\lim_{g\to 0}\tau_{g,h}(z)=z\exp(hz^2/2)$. To simplify, from now on, values of all quantities involving the parameter $g$ evaluated at $g=0$ are defined as their limits attained at $g\to 0$. Aside from $\xi$ and $\omega$, two additional shape parameters,  $g$ and $h$, are introduced to accommodate the potential existence of skewness and heavy-tailness in the data distribution. More precisely, $g > 0$ yields a right-skewed distribution while $g<0$ corresponds to a left-skewed distribution. In the special case of $g=h=0$, the resulting distribution reduces to a normal distribution with mean $\xi$ and variance $\omega^2$. By setting $h=0$, one obtains a shifted log-normal distribution,
whereas letting $g=0$ gives a Pareto-like distribution. In fact, it has been shown that many commonly used distributions can be well-approximated by Tukey's $g$-and-$h$ distribution (Martinez \& Iglewicz, 1984; MacGillivray, 1992; Jim\'{e}nez \& Arunachalam, 2011). In Figure~\ref{fig-5}, we present three examples of using Tukey's $g$-and-$h$ distributions to approximate other distributions, where all parameters were estimated by the proposed estimation procedure using $10,000$ random numbers generated from each distribution. As we can see, all three approximations appear to be quite good.
\begin{figure}[t!]
\centering
\begin{tabular}{ccc}
\includegraphics[width=0.32\textwidth]{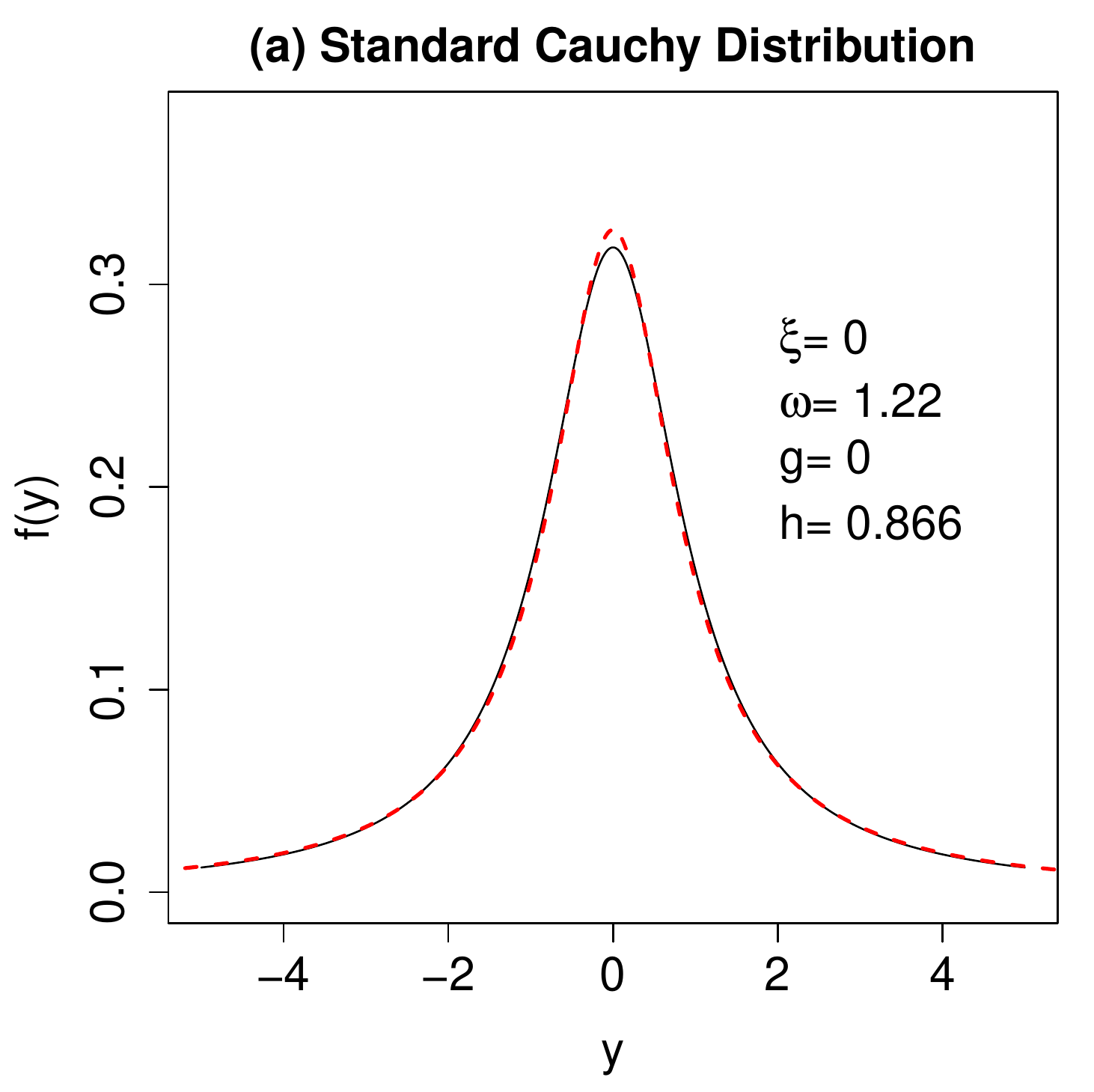} &
\includegraphics[width=0.32\textwidth]{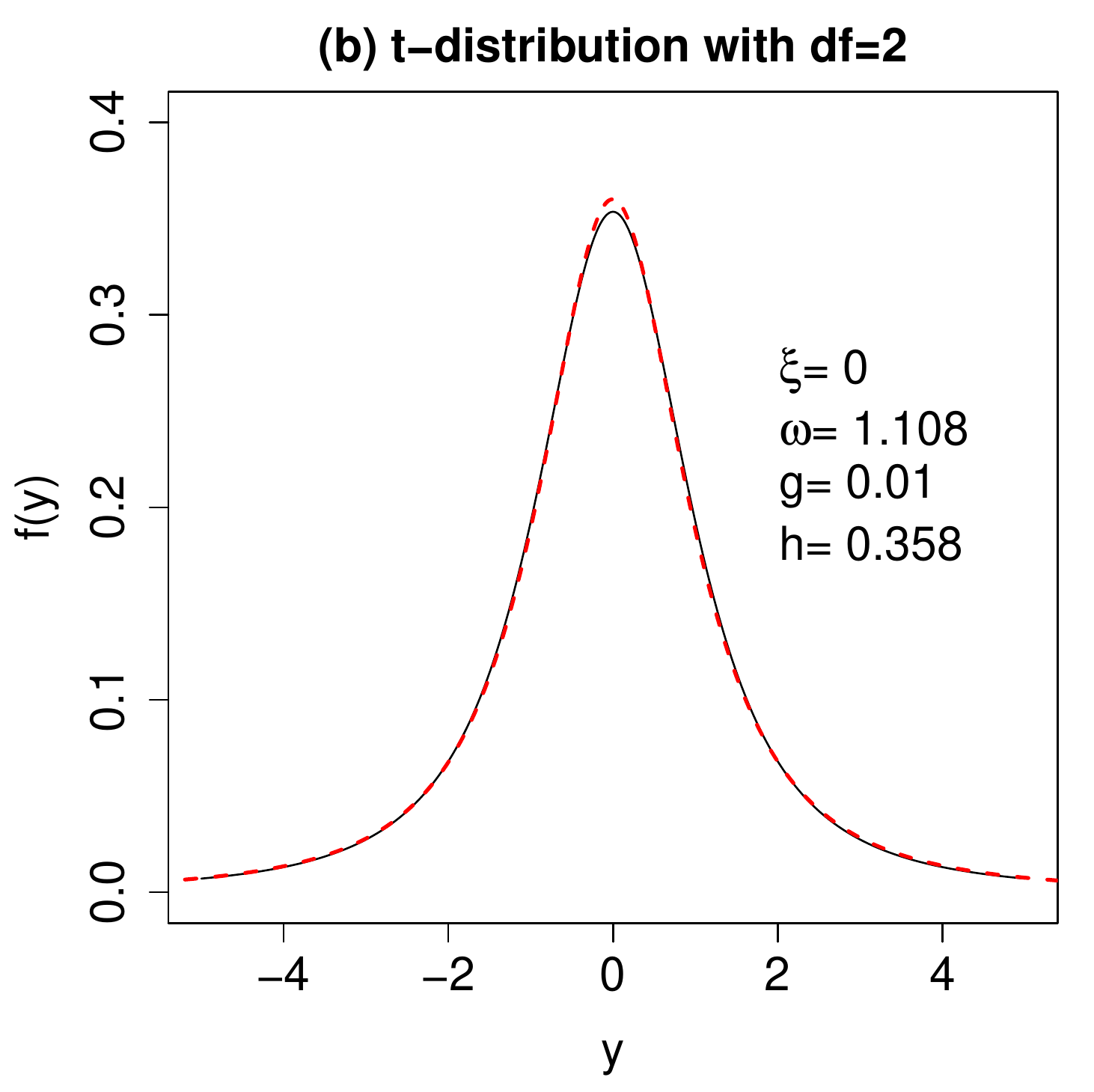} &
\includegraphics[width=0.32\textwidth]{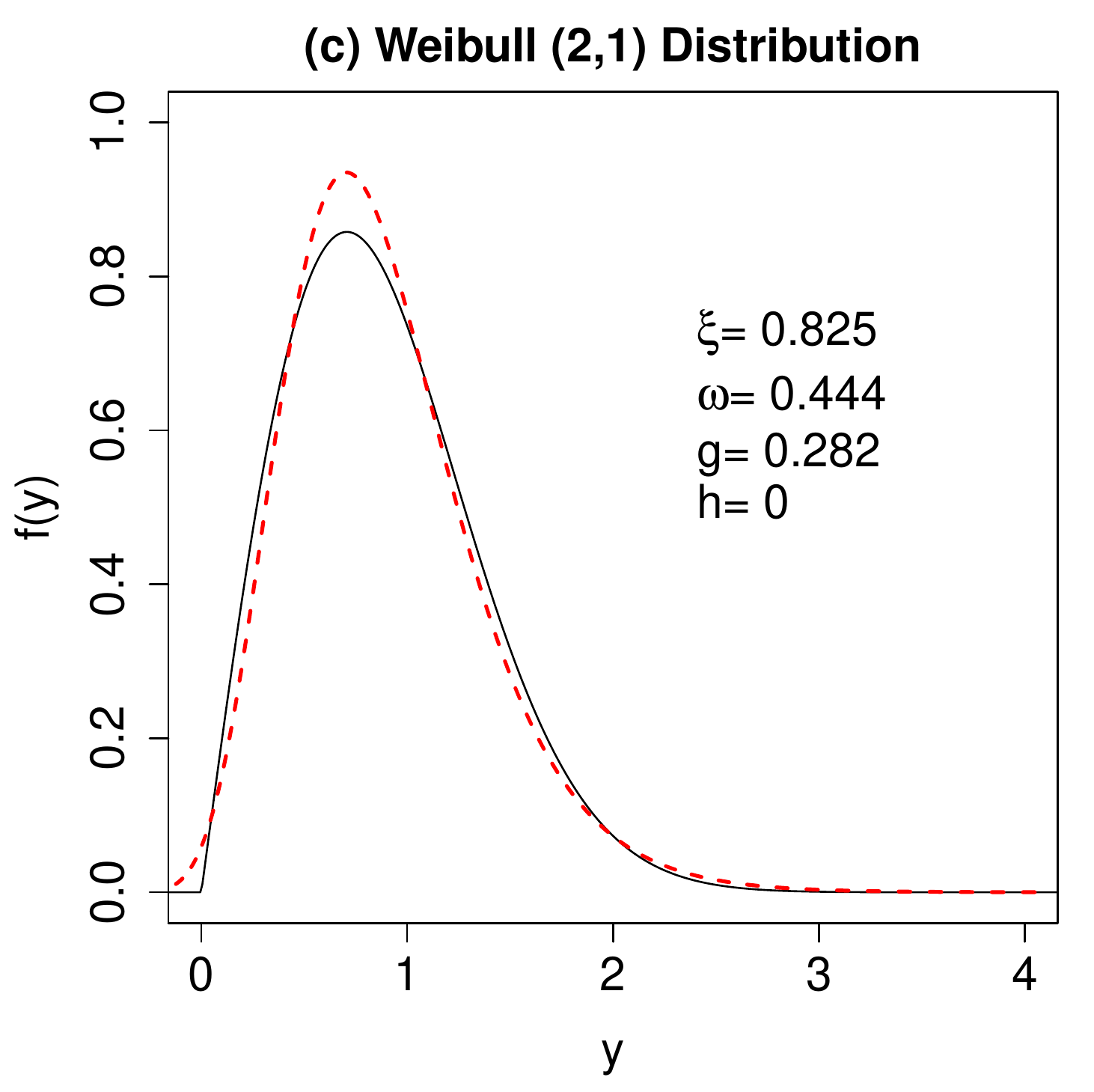}\\
\end{tabular}
\caption{Illustrations of using Tukey's $g$-and-$h$ density function (red dashed line) to approximate other density functions (black solid line): (a) Standard Cauchy distribution; (b) Student $t$-distribution with degrees of freedom 2; (c) Weibull distribution with shape parameter 2 and scale parameter 1.}
\label{fig-5}
\end{figure}

The flexibility of Tukey's $g$-and-$h$ family makes it a powerful tool to model real data arisen from many research areas. Examples of applications include modelling short interest rate distributions (Dutta \& Babbel, 2002), air pollution data (Rayner \& MacGillivray, 2002a, 2002b), extreme wind speed (Field, 2004), the value-at-risk of stock prices (Jim\'{e}nez \& Arunachalam, 2011), operational risk (Degen et al., 2007), and so on. Field \& Genton (2006) proposed a multivariate version of Tukey's $g$-and-$h$ distribution and used it to study data on Australian athletes and on wind speed. He~\& Raghunathan (2006, 2012) proposed to use Tukey's $g$-and-$h$ distribution to perform multiple imputations for missing data. Despite its popularity, there remain two unsolved challenges associated with Tukey's $g$-and-$h$ distribution: the lack of optimal parameter estimation procedures and the lack of valid statistical inference tools. Since a small variation in $g$ and $h$ may result in significant changes of the shape of the distribution, an estimation procedure with high accuracy is crucial for applying Tukey's $g$-and-$h$ distribution to real data. Unfortunately, the most accurate maximum likelihood estimator is not available for Tukey's $g$-and-$h$ distribution because the inverse function of $\tau_{g,h}(\cdot)$ does not have a closed form unless $h=0$, which makes the likelihood function intractable. Rayner \& MacGillivray (2002a) proposed a method to numerically evaluate the log-density function of Tukey's $g$-and-$h$ distribution, but their approach lacks resistance (Hoaglin, 2010) in that it can be computationally expensive when the sample size is large and it may also be numerically unstable for a reason that we will discuss later. To bypass this computational challenge, the existing literature has mainly relied on estimation procedures involving matching sample quantiles (Hoaglin, 1985; Dutta \& Babbel, 2002) or sample moments (Majumder \& Ali, 2008) with their population counterparts. Recently, Xu et al. (2014) proposed a new estimation procedure named quantile least square approach to estimate the parameters. Although all these methods provide satisfactory parameter estimators in many applications, our numerical experience shows that they can be significantly less accurate than the maximum likelihood estimator. An additional problem with these methods compared to the maximum likelihood estimator is that their estimation accuracies depend on a pre-selected set of quantiles or moments, which can be subjective in practice. To the best of our knowledge, there has been no study on how to choose an ``optimal" set of quantiles/moments to sharpen the estimation accuracies for these methods.

A second long-standing challenge with Tukey's $g$-and-$h$ family is how to provide valid statistical inference for the parameters. Although Tukey's $g$-and-$h$ distributions were first introduced as a tool to explore the data, they also have the potential to be used as an inference tool for the underlying distribution. For example, one can make statistical inference on whether the underlying distribution is symmetric by testing the hypothesis $g=0$. While there have been numerous attempts to improve the estimation accuracy, Xu et al. (2014) were the first to derive the asymptotic distribution of their estimator. However, like many other quantile-based estimators, this asymptotic distribution also depends on a subjectively selected set of quantiles and it remains unclear how will this choice affect the validity of the subsequent inference, especially when the sample size is small. Furthermore, the limiting distribution of their estimator is only valid when the true value of $h$, say $h_0$, satisfies the condition $h_0>0$. In the special case of $h_0=0$, the inference becomes irregular and the limiting distribution can be much more complicated than a normal distribution. The reason is that the restriction $h\geq 0$, which is necessary to ensure the monotonicity of $\tau_{g,h}(\cdot)$, makes $h_0=0$ fall on the boundary of the parameter space. Therefore, when $h_0=0$, the regularity conditions in  Xu et al. (2014) will be violated and therefore the result of Xu et al. (2014) cannot be used to test hypotheses such as $h=0$. However, because of the special interpretations of the shape parameters $g$ and $h$, testing $g=0$ or $h=0$ may be of particular interest in many applications.

In this paper, we aim at removing the bottle-neck of sub-optimal estimation procedures and the lack of statistical inference tools for Tukey's $g$-and-$h$ distribution. By approximating the likelihood function using a much simpler tractable function, we are able to obtain a maximum approximated likelihood estimator for parameters of Tukey's $g$-and-$h$ distributions, which is shown to be as efficient as the true maximum likelihood estimator under mild conditions. In addition, we derive the limiting distribution of the proposed maximum approximated likelihood estimator and develop valid approximated likelihood ratio tests for a series of hypotheses for the shape parameters $g$ and $h$, regardless of the true value $h_0$ equals $0$ or not. Our simulation studies demonstrate that the proposed approach is much more efficient than the quantile-based estimators and reaches the same efficiency as that of the maximum likelihood estimator.

The rest of the paper is organized as follows. In Section~2, an efficient estimation approach based on an approximated likelihood function for Tukey's $g$-and-$h$ distribution is proposed and related computational issues are discussed. The asymptotic and finite sample properties of the proposed maximum approximated likelihood estimators are investigated in Section~3. In Section~4, simulation studies are conducted to evaluate the performance of the proposed estimation procedure and approximated likelihood ratio tests. An application of our methodology to air pollution data is presented
in Section~5. The article ends with a conclusion in Section~6 and all theoretical
results are collected in the Appendix.

\section{Parameter Estimation}

\subsection{Existing approaches}

Denote the parameter vector of Tukey's $g$-and-$h$ distribution by $\btheta=(\xi,\omega,g,h)^T$. The log-density function of the random variable $Y$ from transformation~(\ref{TGHdef}) can be written as
\be
\label{density}
\log f_{Y|\btheta}(y)=\log\phi\left\{\tau_{g,h}^{-1}\left(\frac{y-\xi}{\omega}\right)\right\}-\log\omega-\log\tau_{g,h}'\left\{\tau_{g,h}^{-1}\left(\frac{y-\xi}{\omega}\right)\right\},
\ee
where $\phi(\cdot)$ is the standard normal density function, and $\tau_{g,h}^{-1}(\cdot)$  and $\tau_{g,h}'(\cdot)$ are the inverse function and the first derivative function of $\tau_{g,h}(\cdot)$, respectively. Suppose that we have a random sample $\{y_1,\dots,y_n\}$ and let $\bY=(y_1,\dots,y_n)^T$. Then the maximum likelihood estimator $\hat{\btheta}_{mle,n}$ is obtained by maximizing the log-likelihood function
\be
\label{mle}
L_n(\btheta)=\sum_{i=1}^n\log f_{Y|\btheta}(y_i).
\ee
It is well known that under mild regularity conditions, the limiting distribution of $\hat{\btheta}_{mle,n}$ has the smallest variance. One can further utilize tools such as the likelihood ratio test to make statistical inference on $\btheta$. Unfortunately, since $\tau_{g,h}^{-1}(\cdot)$ does not have a closed form, numerically evaluating $L_n(\btheta)$ can be computationally expensive, especially when the sample size is large. For this reason, the existing literature has largely been focusing on searching for alternative estimators, two of such examples are given below.

For a pre-selected sequence, $0<p_1<p_2<\cdots<p_K<1$, denote by $\hat{q}_{p_1},\dots,\hat{q}_{p_K}$ the corresponding sample quantiles of $\{y_1,\dots,y_n\}$ and let $z_{p_k}=\Phi^{-1}(p_k)$ for $k=1,\dots,K$, where $\Phi^{-1}(\cdot)$ is the inverse of the $N(0,1)$ cumulative distribution function. The first approach aims at directly matching a sequence of sample quantiles and theoretical quantiles, which we refer to as the letter-value-based approach (Dutta \& Babbel, 2002). The letter-value-based estimator $\hat\btheta_{lv,n}=(\hat \xi_{lv}, \hat \omega_{lv}, \hat g_{lv},\hat h_{lv})^T$ is defined as: $\hat \xi_{lv}=\hat q_{1/2}$, $\hat g_{lv}=\mbox{median}_{k=1,\ldots,K}\{\hat g_k\}$, where $\hat g_k=-\frac{1}{z_{p_k}} \log \left( \frac{\hat{q}_{1-p_k}-\hat{q}_{1/2}}{\hat q_{1/2}-\hat q_{p_k}}\right)$ and finally $(\hat\omega_{lv}, \hat h_{lv})$ are obtained from the linear regression
\[
\log\left\{ \frac{\hat g_{lv}(\hat q_{p_k}-\hat q_{1-p_k})}{\exp(\hat g_{lv}z_{p_k})-\exp(-\hat g_{lv}z_{p_k})}\right\}=\log \omega + h z_{p_k}^2/2,\quad k=1,\dots,K.
\]

A second approach was proposed by Xu et al. (2014), named the quantile least square estimator $\hat\btheta_{qls,n}$, which is defined as the minimizer of the quantile least square loss function
\[
L_{qls}(\btheta)=\sum_{k=1}^K\left\{\hat{q}_{p_k}-q_{p_k}(\btheta)\right\}^2,
\]
where $q_{p_k}(\btheta)=\xi+\omega\tau_{g,h}(z_{p_k})$ is the theoretical $p_k$th quantile of the random variable $Y$.

As one can see, both $\hat\btheta_{lv,n}$ and $\hat\btheta_{qls,n}$ rely on a suitable choice of $p_{k}$'s. While this choice is largely subjective in practice, there has not yet been research on how this choice may impact the efficiency and the limiting distribution of resulting estimators.

\subsection{A second representation of $\log f_{Y|\btheta}(y)$}

The main difficulty for obtaining the maximum likelihood estimator lies in the lack of closed form for $\log f_{Y|\btheta}(y)$. A natural idea is to find an explicitly computable function that can approximate $\log f_{Y|\btheta}(y)$ well. To do so, we first define $p_{\btheta}(y)=F_{Y|\btheta}(y)$ and $z_{p_{\btheta}(y)}=\Phi^{-1}\{p_{\btheta}(y)\}$, where $F_{Y|\btheta}(\cdot)$ is the cumulative distribution function of $Y$ with a parameter vector $\btheta$. By this definition, it is straightforward to show that $z_{p_{\btheta}(y)}=\tau_{g,h}^{-1}(\frac{y-\xi}{\omega})$. Then $\log f_{Y|\btheta}(y)$ in (\ref{density}) can be written as a function of $z_{p_{\btheta}(y)}$; that is,
\be
\label{density1}
\begin{split}
\varphi_{\btheta}\{z_{p_{\btheta}(y)}\}&= \log f_{Y|\btheta}(y) =\log\phi\{z_{p_{\btheta}(y)}\}-\log\omega-\log\tau_{g,h}'\{z_{p_{\btheta}(y)}\}\\&=-\frac{1+h}{2}z_{p_{\btheta}(y)}^2-\log\left[\exp\{gz_{p_{\btheta}(y)}\}+g^{-1}\{\exp(gz_{p_{\btheta}(y)})-1\}hz_{p_{\btheta}(y)}\right]-\log\omega-\frac{1}{2}\log(2\pi).
\end{split}
\ee
\begin{figure}[b!]
\centering
\begin{tabular}{cc}
\includegraphics[width=0.40\textwidth]{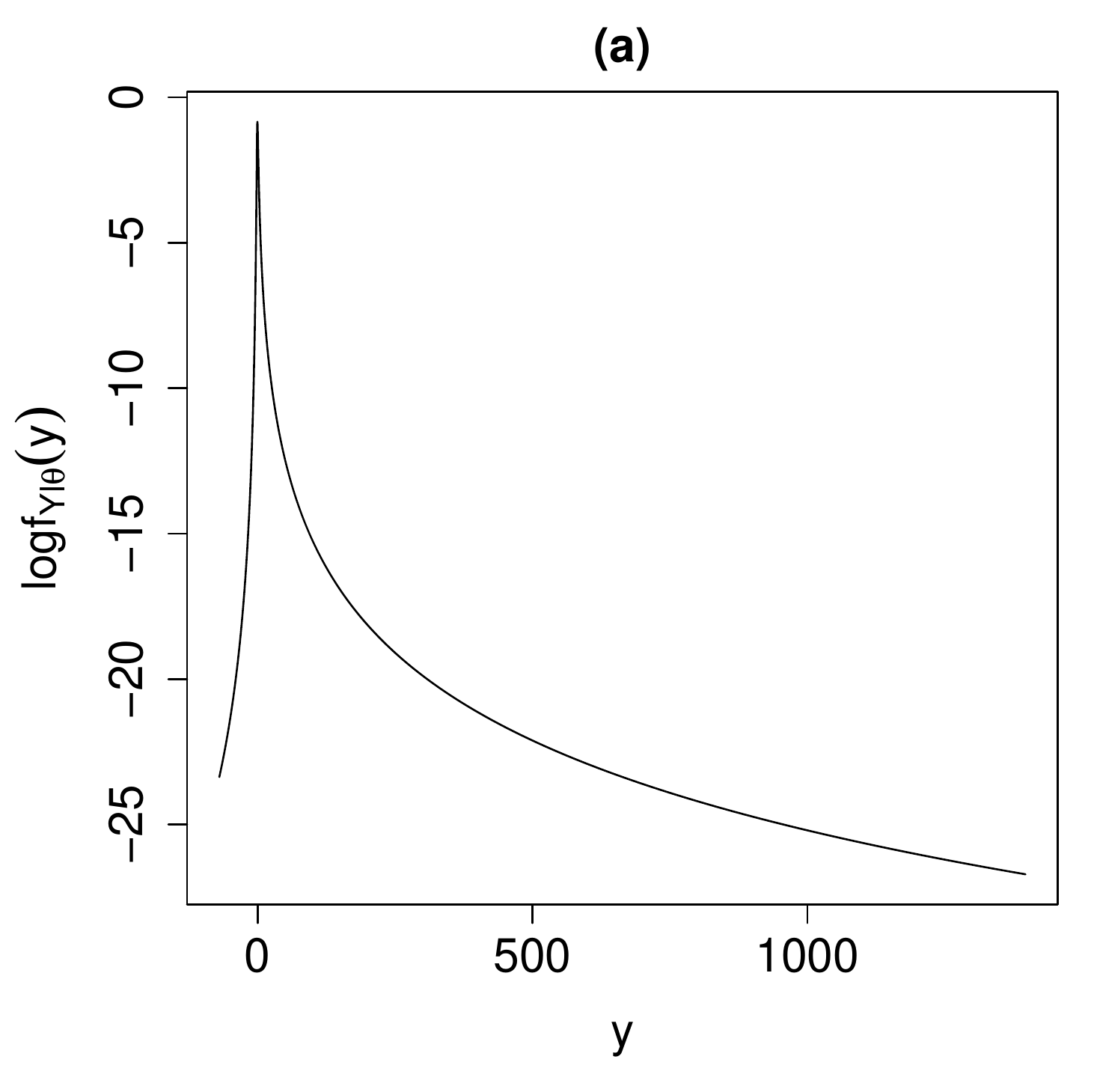} &
\includegraphics[width=0.40\textwidth]{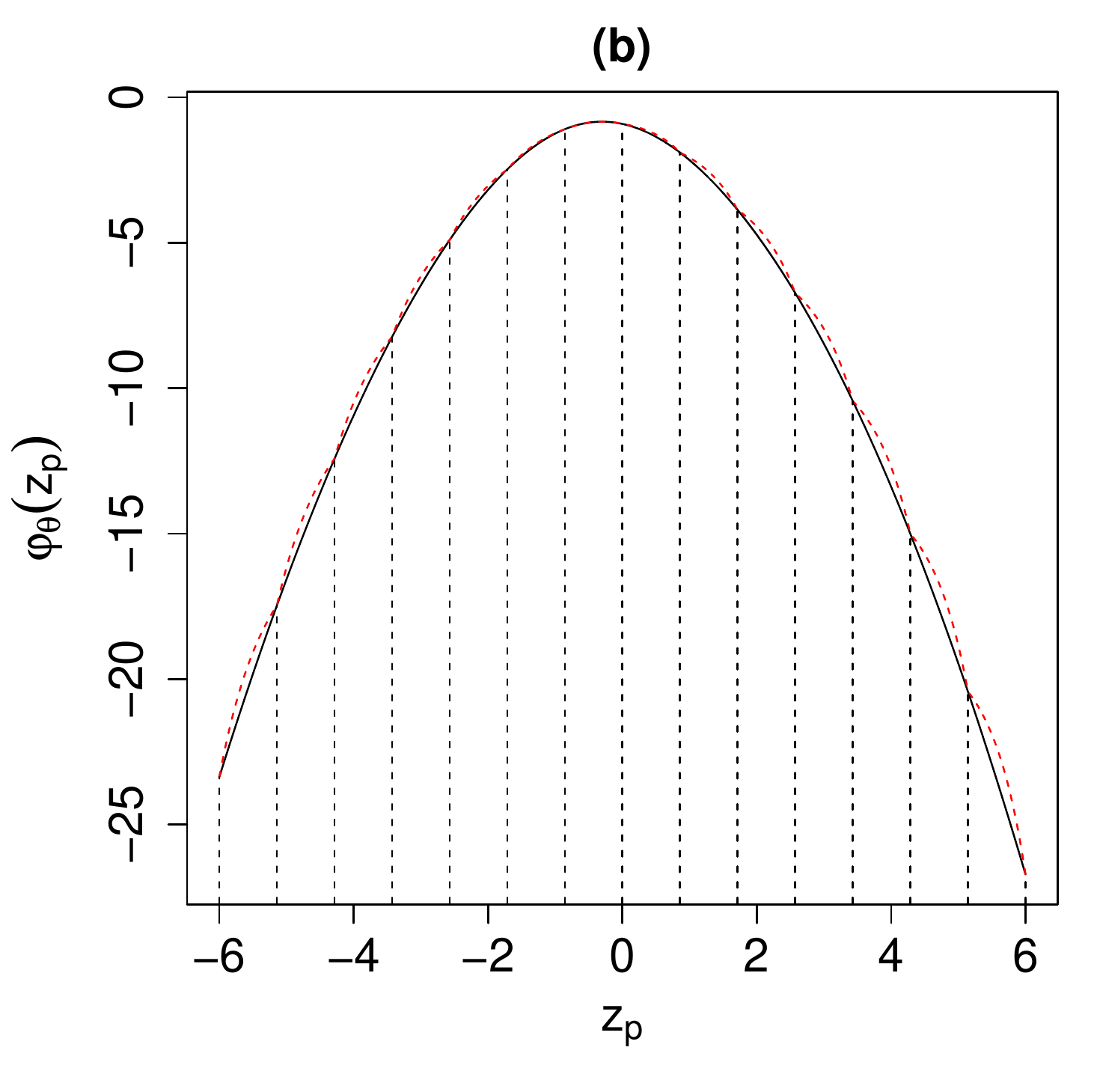}\\
\includegraphics[width=0.40\textwidth]{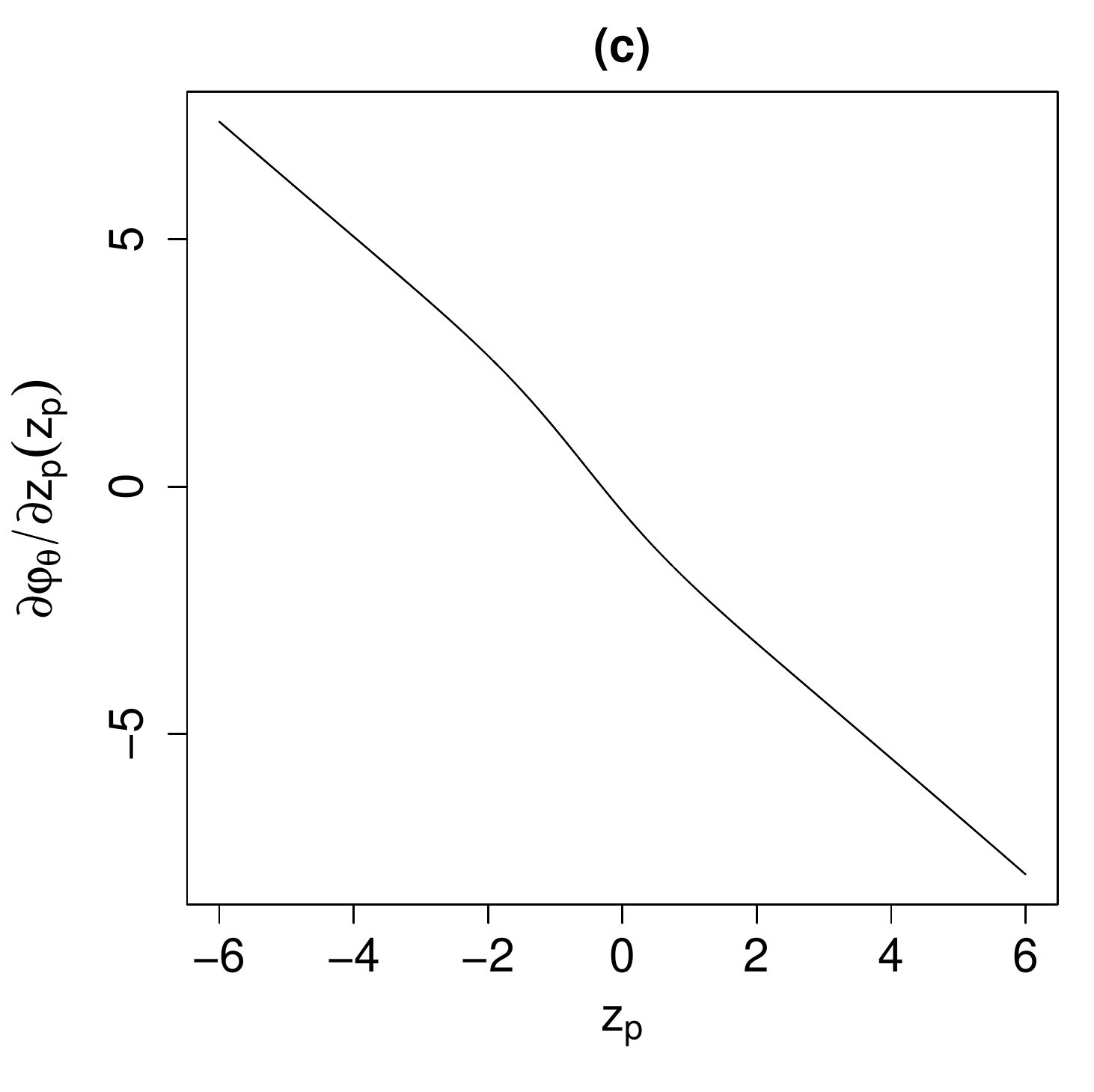} &
\includegraphics[width=0.40\textwidth]{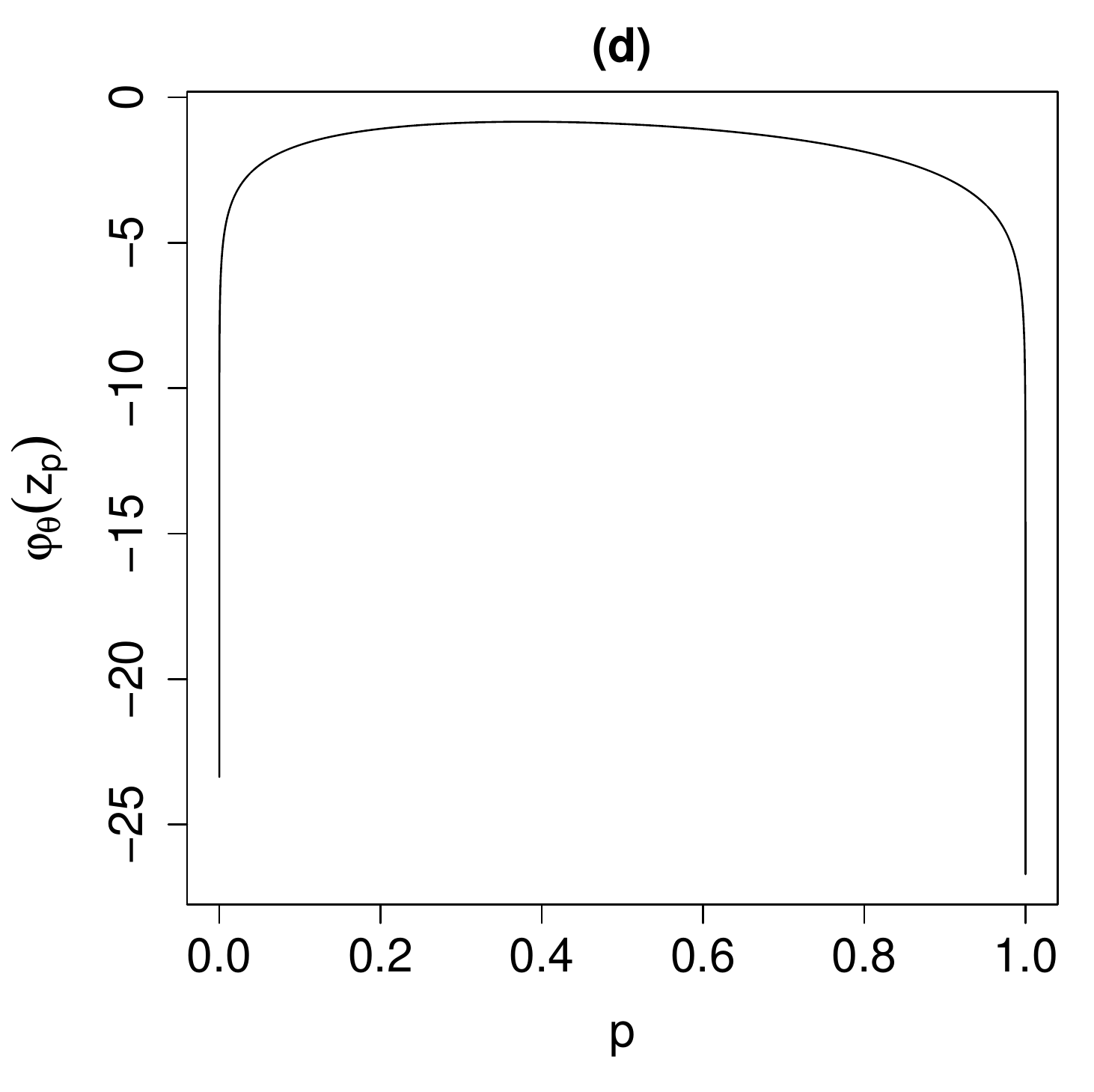}\\
\end{tabular}
\caption{The log-density function of a Tukey's $g$-and-$h$ distribution with $(\xi,\omega,g,h)=(0,1,0.5,0.2)$. (a) $\log f_{Y|\btheta}(y)$ as a function of $y$; (b) $\varphi_{\btheta}(z_{p})$ (solid line) vs $\tilde{\varphi}_{\btheta}(z_{p})$ (dashed line, $K_n=15$,  $b_n=6$) as functions of $z_p$; (c) $\frac{\partial \varphi_{\btheta}}{\partial^{\dag} z_p}(z_p)$ as a function of $z_p$; (d) $\varphi_{\btheta}(z_{p})$ as a function of $p$.}
\label{fig-1}
\end{figure}
The notation $z_{p_{\btheta}(y)}$ is used here to emphasize that $z_{p_{\btheta}(y)}$ is an unknown quantity depending on $\btheta$ and $y$. For simplicity, from now on, we write $z_{p_{\btheta}(y)}$ as $z_p$ whenever there is no ambiguity. Recall that $p$ is the cumulative distribution function of $Y$ evaluated at the observed value $y$. Figures~\ref{fig-1}(a) and (b) depict an example of $\log f_{Y|\btheta}(y)$, or equivalently $\varphi_{\btheta}(z_{p})$, as a function $y$ and $z_p$, respectively. From Figures~\ref{fig-1}(a)-(b), we can see that the observed data $\{y_1,\dots,y_n\}$ are sparsely distributed in $(-\infty,\infty)$. As a function of $y$, $\log f_{Y|\btheta}(y)$ varies rapidly on the left-hand side of the sharp spike in the middle. On the contrary, within a bounded interval, $\varphi_{\btheta}(z_{p})$ is a slowly varying function of $z_p$, which makes it easier to approximate using some numerical method. Furthermore, although $\varphi_{\btheta}(z_{p})$ is also defined on $(-\infty,\infty)$ in theory, with a finite sample size $n$, one only has to focus on a bounded interval $[-b_n,b_n]$ for some $b_n>0$. To see this, consider the case when $\btheta=\btheta_0$ with $\btheta_0=(\xi_0,\omega_0,g_0,h_0)^T$ being the true value of $\btheta$ that generated the data and choose $b_n=\Phi^{-1}\{1-1/(n\log n)\}$. In this case, the $p_i=F_{Y|\btheta_0}(y_i)$'s follow a uniform distribution on $[0,1]$ and the corresponding $z_{p_i}$'s follow the $N(0,1)$ distribution. Straightforward algebra yields that
\[
{\rm P}\left(\max_{1\leq i\leq n}z_{p_i}>b_n\right)=1-\Phi(b_n)^n=1-\left(1-\frac{1}{n\log n}\right)^n\to 0\text{ as } n\to\infty,
\]
where $\Phi(\cdot)$ is the standard normal cumulative distribution function. In this sense, with a sample size $n=10,000$, using $b_n=4.25$ is already a good choice. Therefore, for $\btheta$ in a sufficiently small neighborhood of $\btheta_0$ and a finite sample size $n$, it is reasonable to treat $\varphi_{\btheta}(z_{p})$ as a function defined on a bounded support $[-b_n,b_n]$. Our numerical experience shows that it is generally sufficient to take $b_n=10$ in practice.

\subsection{Maximum approximated likelihood estimator}

In this section we develop a numerical
algorithm for computing the log-density value, $\log f_{Y|\btheta}(y)=\varphi_{\btheta}(z_{p})$, at an observation $y$ for a known parameter
vector $\btheta$, which cannot be computed directly because the value of $z_p$ is unknown due to the fact that $\tau_{g,h}^{-1}(\cdot)$ does not have a closed form. To overcome this, we propose the following approach to approximate the value of $z_p$. With a sample size $n$ and a pre-given $b_n>0$, we first introduce $K_n$ equally spaced knots over the interval $[-b_n,b_n]$, denoted as $-b_n=\textsf{Z}_1<\textsf{Z}_2<\dots<\textsf{Z}_{K_n}=b_n$, and then compute the corresponding knots in the transformed scale as $\textsf{Y}_{k,\btheta}=\xi+\omega\tau_{g,h}(\textsf{Z}_k)$, $k=1,\dots,K_n$. For a given $y\in [\textsf{Y}_{1,\btheta},\textsf{Y}_{K_n,\btheta}]$, we find the knot $\textsf{Z}_k$ such that $\textsf{Y}_{k,\btheta}\leq y<\textsf{Y}_{k+1,\btheta}$. The monotonicity of $\tau_{g,h}(\cdot)$ ensures that the $z_p$ associated with $y$ must satisfy $\textsf{Z}_k\leq z_p <\textsf{Z}_{k+1}$. Instead of computing $z_p$ by numerically solving the equation $y=\xi+\omega\tau_{g,h}(z_p)$, we use the following linear approximation
\be
\label{zhat}
\tilde{z}_{p,k}=\textsf{Z}_k+\frac{y-\textsf{Y}_{k,\btheta}}{\textsf{Y}_{k+1,\btheta}-\textsf{Y}_{k,\btheta}}(\textsf{Z}_{k+1}-\textsf{Z}_k)\quad \text{if}\quad \textsf{Y}_{k,\btheta}\leq y<\textsf{Y}_{k+1,\btheta}.
\ee
Because $|\tilde{z}_{p,k}-z_p|\leq 2b_n/K_n$ by design, we can expect that if $K_n$ is sufficiently large such that $b_n/K_n\to 0$, $\tilde{z}_{p,k}$ should approximate $z_p$ well. Then, for any observed value $y$ such that $y=\xi+\omega\tau_{g,h}(z_p)$, we can define an approximation function for $\varphi_{\btheta}(z_p)$ as
\be
\label{density2}
\tilde{\psi}_{\btheta}(y)=\tilde{\varphi}_{\btheta}(z_p)=\sum_{k=1}^{K_n-1}\varphi_{\btheta}(\tilde{z}_{p,k})I_{[\textsf{Y}_{k,\btheta},\textsf{Y}_{k+1,\btheta}]}(y),
\ee
where $I_A(y)=1$ if $y\in A$ and $0$ otherwise. In Figure~{\ref{fig-1}}(b), we can see that $\tilde{\varphi}_{\btheta}(z_p)$ is a piecewise convex function yielding a good approximation, even though only $K_n=15$ knots were used in this example. To see why this is the case, it is straightforward to show that, for any given $\btheta$,
\[
\sup_{y\in [\textsf{Y}_{1,\btheta},\textsf{Y}_{K_n,\btheta}]}\left|\varphi_{\btheta}(z_p)-\tilde{\varphi}_{\btheta}(z_p)\right|\leq \frac{2b_n}{K_n}\sup_{z_p\in [-b_n,b_n]}\left|\frac{\partial \varphi_{\btheta}}{\partial^{\dag} z_p}(z_p)\right|,
\]
where the notation $\partial^{\dag}$ is used here to distinguish it from the usual partial derivative. More specifically, by using $\partial^{\dag}$, we treat $z_p$ as an argument of the function $\varphi_{\btheta}(z_p)$ that is independent of $\btheta$, even though $z_p$ is actually a function of $\btheta$. For example,
 \[
 \label{diff1}
\frac{\partial \varphi_{\btheta}}{\partial^{\dag} z_p}(z_p)=-(1+h)z_p-g-\frac{g^{-1}\{\exp(gz_p)-1\}+z_p}{\exp(gz_p)+g^{-1}\{\exp(gz_p)-1\}hz_p}h.
 \]
  Figure~{\ref{fig-1}}(c) illustrates that for the same example used in Figure~{\ref{fig-1}}(b), $\frac{\partial \varphi_{\btheta}}{\partial^{\dag} z_p}(z_p)$ is essentially bounded within the interval $[-b_n,b_n]$ and therefore $\sup_{y\in [\textsf{Y}_{1,\btheta},\textsf{Y}_{K_n,\btheta}]}\left|\varphi_{\btheta}(z_p)-\tilde{\varphi}_{\btheta}(z_p)\right|\to 0$ as long as $b_n/K_n\to 0$ when $n\to\infty$.

Given a random sample $\{y_1,\dots,y_n\}$, we can repeat the above process to calculate $p_i=F_{Y|\btheta}(y_i)$ and $z_{p_i}=\Phi^{-1}(p_i)$ for each data point. Then, the log-likelihood function~(\ref{mle}) can be approximated by the function
\be
\label{male}
\tilde{L}_n(\btheta)=\begin{cases}\sum_{i=1}^{n} \tilde{\psi}_{\btheta}(y_i),& \text{if}\quad \textsf{Y}_{1,\btheta}\leq y_{\min}<y_{\max}\leq \textsf{Y}_{K_n,\btheta}, \\-\infty,&\text{otherwise,}
\end{cases}
\ee
where $\tilde{\psi}_{\btheta}(y)$ is defined in~(\ref{density2}) and $y_{\min},y_{\max}$ are the smallest and largest observed values. The maximum approximated likelihood estimator, $\hat{\btheta}_{male,n}$, is defined as the maximizer of (\ref{male}).

\subsection{Some computational issues}

 Rayner \& MacGillivray (2002a) proposed a similar ``change of support" approach to numerically approximate $\log f_{Y|\btheta}(y_i)$ for each $y_i$, $i=1,\dots,n$. The idea is essentially to numerically solve $y=\xi+\omega\tau_{g,h}(z_p)$ for $p$ and then plug the solution $\hat{p}$ back into $\log f_{Y|\btheta}(y)=\varphi_{\btheta}(z_p)$. However, this approach can be computationally expensive when $n$ is large. In addition, as illustrated in Figure~{\ref{fig-1}}(d), $\varphi_{\btheta}(z_p)$ as a function of $p$ is rather steep on both ends of the interval $[0,1]$, which is very different from Figure~{\ref{fig-1}}(b). Consequently, for $p$ close to $0$ or $1$, a small error in $\hat{p}$ can lead to non-negligible errors when evaluating $\varphi_{\btheta}(z_p)$ using $\varphi_{\btheta}(z_{\hat{p}})$. In this sense, Rayner \& MacGillivray (2002a)'s method can also be numerically unstable.

The computational complexity of $\tilde{L}_n(\btheta)$ is equivalent to assigning $n$ data points to $K_n-1$ bins in a histogram, which can be efficiently implemented using some bucket sort algorithm (Corwin \& Logar, 2004). The average computational complexity for such an algorithm can be easily achieved as $O(n+K_n)$, which is fast even for very large $n$ and $K_n$. In this paper, we use the function {\tt .bincode} from the R software (R Development Core Team, 2014) to implement this algorithm. Based on Theorems 1-2 below, to guarantee estimation efficiency, we need to ensure that $K_n$ is large enough so that Condition 5 in the Appendix is met. Our numerical experience indicates that using $K_n=\max(1000,n)$ is sufficient for most applications. With such a choice of $K_n$, our simulation studies in Section 4 demonstrate that the proposed estimation method can be several hundred times faster than the numerical maximum likelihood estimation approach proposed in Rayner \& MacGillivray (2002a).

Another important issue in maximizing $\tilde{L}(\btheta)$ defined in~(\ref{male}) is the choice of initial values for $\btheta$. In the Supplementary Material, we give the Gradient and the Hessian matrix of $\tilde{L}(\btheta)$, which can be utilized for any standard optimization routine. We use the ``L-BFGS-B" method (Byrd et al. 1995) provided in the R function {\tt optim} to maximize $\tilde{L}(\btheta)$ since it is a constrained optimization problem due to the restriction $h\geq 0$. To find $\hat\btheta_{male,n}$, the initial value $\btheta^0$ must be chosen such that Condition~1 in the Appendix is met, that is, $\textsf{Y}_{1,\btheta^0}\leq y_{\min}<y_{\max}\leq \textsf{Y}_{K_n,\btheta^0}$. Based on our discussion in Subsection~2.1, we need essentially to find a $\btheta^0$ that is reasonably close to the true value $\btheta_0$. Fortunately, such a requirement can be easily fulfilled by taking $\btheta^0$ as the letter-value-based estimator $\hat\btheta_{lv,n}$ or the quantile least square estimator $\hat\btheta_{qls,n}$. In this paper, we choose $\hat\btheta_{lv,n}$ as the initial value of our algorithm.

Finally, we would like to point out that, unlike the quantile-based estimators $\hat\btheta_{lv,n}$ and  $\hat\btheta_{qls,n}$, whose performances depend on a subjective choice of quantiles, the estimation accuracy of the maximum approximated likelihood estimator $\hat\btheta_{male,n}$ does not depend on the positions of the chosen knots as long as the number of knots $K_n$ is sufficiently large. This is reflected in the fact that the results of Theorems 1-2 do not depend on $K_n$ at all. In addition, we have proposed to use equally spaced knots merely for the simplicity of our technical investigations. Generally speaking, any knot sequence $-b_n=\textsf{Z}_1<\textsf{Z}_2<\dots<\textsf{Z}_{K_n}=b_n$ satisfying the condition that $\inf_{1\leq i\leq K_n-1}|\textsf{Z}_{i+1}-\textsf{Z}_{i}|\to 0$ sufficiently fast as $n\to\infty$ will do the job.

\section{Statistical Inference}

It is often desirable to provide an uncertainty measure for a point estimator in statistical research, which remains a challenge when fitting Tukey's $g$-and-$h$ distribution to data. For the popular letter-value-based approach, the asymptotic distribution of $\hat{\btheta}_{lv,n}$ remains unknown. Although the quantile least square approach proposed by Xu et al. (2014) results in an asymptotic normal distribution for $\hat{\btheta}_{qls,n}$, there remain two problems to be addressed. First, the asymptotic distribution of $\hat{\btheta}_{qls,n}$ depends on a preselected set of $p_k$'s and it remains unknown how this choice affects the subsequent statistical inference. Furthermore, their theory relies on the implied assumption that $h>0$ and does not hold if the true value of $h$ is $0$. However, for Tukey's $g$-and-$h$ distribution, $h=0$ is a special case of particular interest. On the one hand, by testing $h=0$, one can tell whether or not the data have heavy tails. On the other hand, if we have significant evidence to believe that the true value of $h$ is $0$, then letting $h=0$ would make the function $\tau_{g,h}(\cdot)$ invertible and thus makes the maximum likelihood estimator achievable.

\subsection{Asymptotic properties of $\hat\btheta_{male,n}$}
 Denote by $\Omega=\Bbb{R}\times (0,\infty)\times \Bbb{R}\times [0,\infty)$ the parameter space of $\btheta=(\xi,\omega,g,h)^T$ and define $U_n(\btheta)=\partial L_n(\btheta)/\partial \btheta$, $\tilde{U}_n(\btheta)=\partial \tilde{L}_n(\btheta)/\partial \btheta$ and $I_n(\btheta)=-\partial^2L_n(\btheta)/\partial \btheta\partial\btheta^T$. Let $I(\btheta_0)$ be the expected value of $I_n(\btheta)$ when $\btheta=\btheta_0$ and define a random vector $\bZ=(Z_1,Z_2,Z_3,Z_4)^T\sim N_4(\bm 0,I^{-1}(\btheta_0))$, where $N_4(\bm 0, I^{-1}(\btheta_0))$ is a 4-dimensional multivariate normal distribution with mean $\bf 0$ and covariance matrix $I^{-1}(\btheta_0)$. The following theorem gives the asymptotic distribution of $\hat{\btheta}_{male,n}$ under the scenario $h_0=0$ or $h_0>0$. The proof is given in the Appendix.

\begin{theo}
\label{thm1}(Asymptotic Distribution) Under Conditions 1-5 in the Appendix, as $n\to\infty$ and $K_n\to\infty$, we have: (a) If the true value $h_0>0$, then $\sqrt{n}(\hat{\btheta}_{male,n}-\btheta_0)\rightarrow \bZ$ in distribution; (b) If the true value $h_0=0$, then
\[
\sqrt{n}(\hat{\btheta}_{male,n}-\btheta_0)\rightarrow\left(\begin{array}{c}Z_1\\Z_2\\Z_3\\Z_4\end{array}\right)I(Z_4>0)+\left(\begin{array}{c}Z_1-I^{14}/I^{44}Z_4\\Z_2-I^{24}/I^{44}Z_4\\Z_3-I^{34}/I^{44}Z_4\\0\end{array}\right)I(Z_4<0)\quad\text{in distribution},
\]
where $I^{ij}$ is the $ij$-th element of $I^{-1}(\btheta_0)$, $\btheta=(\xi,\omega,g,h)^T$ and $\btheta_0=(\xi_0,\omega_0,g_0,h_0)^T$.
\end{theo}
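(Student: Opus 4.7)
The overall strategy is to reduce the asymptotic analysis of the maximum approximated likelihood estimator to that of the ordinary (intractable) MLE by showing that the approximation error introduced by $\tilde L_n$ is asymptotically negligible, and then to apply standard interior-point or boundary arguments depending on whether $h_0>0$ or $h_0=0$. Concretely, I would first establish, using the bound
\[
\sup_{y\in [\textsf{Y}_{1,\btheta},\textsf{Y}_{K_n,\btheta}]}\left|\varphi_{\btheta}(z_p)-\tilde{\varphi}_{\btheta}(z_p)\right|\leq \frac{2b_n}{K_n}\sup_{z_p\in[-b_n,b_n]}\left|\frac{\partial\varphi_{\btheta}}{\partial^\dagger z_p}(z_p)\right|,
\]
together with analogous bounds for the first and second derivatives of $\tilde\varphi_{\btheta}$ with respect to $\btheta$ (controlled by Condition 5 on the rate of $K_n$), that uniformly on a shrinking neighborhood $\|\btheta-\btheta_0\|=O(n^{-1/2})$,
\[
n^{-1/2}\bigl\{\tilde U_n(\btheta)-U_n(\btheta)\bigr\}=o_p(1),\qquad n^{-1}\bigl\{\partial\tilde U_n(\btheta)/\partial\btheta^{T}-\partial U_n(\btheta)/\partial\btheta^{T}\bigr\}=o_p(1).
\]
Combined with the classical CLT $n^{-1/2}U_n(\btheta_0)\Rightarrow N(\mathbf{0},I(\btheta_0))$ and the LLN $n^{-1}I_n(\btheta_0)\povr I(\btheta_0)$, this yields $n^{-1/2}\tilde U_n(\btheta_0)\Rightarrow I(\btheta_0)\bZ$, which is the key stochastic input for both parts of the theorem.

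For part (a) with $h_0>0$, the constraint $h\geq 0$ is inactive at $\btheta_0$, so consistency (established via a separate argument using uniform convergence of $\tilde L_n/n$ to the Kullback--Leibler limit and identifiability) places $\hat\btheta_{male,n}$ in the interior with probability tending to one. A standard Taylor expansion of $\tilde U_n(\hat\btheta_{male,n})=\mathbf{0}$ around $\btheta_0$, combined with the two approximation statements above, gives
\[
\sqrt n(\hat\btheta_{male,n}-\btheta_0)=I(\btheta_0)^{-1}\,n^{-1/2}\tilde U_n(\btheta_0)+o_p(1)\Rightarrow I(\btheta_0)^{-1}I(\btheta_0)\bZ=\bZ,
\]
which is the desired $N_4(\mathbf{0},I^{-1}(\btheta_0))$ limit.

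For part (b) with $h_0=0$, $\btheta_0$ lies on the boundary $\{h=0\}$, so I would follow the Chernoff--Self--Liang boundary framework. Writing $\bm u=\sqrt n(\btheta-\btheta_0)$, the quadratic expansion
\[
\tilde L_n(\btheta_0+\bm u/\sqrt n)-\tilde L_n(\btheta_0)=S_n^{T}\bm u-\tfrac12\bm u^{T}I(\btheta_0)\bm u+o_p(1),\qquad S_n:=n^{-1/2}\tilde U_n(\btheta_0)\Rightarrow I(\btheta_0)\bZ,
\]
holds uniformly on compacts, and since $\hat\btheta_{male,n}$ is $\sqrt n$-consistent, $\sqrt n(\hat\btheta_{male,n}-\btheta_0)$ is asymptotically the maximizer of $S_n^{T}\bm u-\tfrac12\bm u^{T}I(\btheta_0)\bm u$ over the half-space $\{u_4\geq 0\}$. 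The unconstrained maximizer is $I(\btheta_0)^{-1}S_n\Rightarrow\bZ$; if $Z_4>0$ this lies in the feasible region and the limit is $\bZ$, while if $Z_4<0$ the constrained maximum occurs on $\{u_4=0\}$ and a Lagrange-multiplier computation combined with the block-inverse identity $[I(\btheta_0)_{[1:3,1:3]}]^{-1}I(\btheta_0)_{[1:3,4]}=-(I^{14},I^{24},I^{34})^{T}/I^{44}$ produces exactly the projection formula $(Z_i-(I^{i4}/I^{44})Z_4)_{i=1}^3$ with a zero fourth coordinate.

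The main obstacle I anticipate is the uniform approximation step, i.e., showing that the piecewise-linear substitution $\tilde z_{p,k}$ for $z_p$ induces only $o_p(1)$ error in the properly normalized score and Hessian uniformly on an $n^{-1/2}$-neighborhood of $\btheta_0$; this is delicate because differentiating $\tilde L_n$ in $\btheta$ passes through the implicit dependence of the bins $\textsf Y_{k,\btheta}$ on $\btheta$, so careful use of Condition 5 on the knot density $b_n/K_n$ and of boundedness of $\partial\varphi_{\btheta}/\partial^\dagger z_p$ (illustrated in Figure~\ref{fig-1}(c)) is required to push these terms to $o_p(1)$. A secondary, more routine obstacle is establishing $\sqrt n$-consistency of $\hat\btheta_{male,n}$ in the boundary case, for which I would use a standard argmax/tightness argument exploiting the strict concavity of the limiting quadratic form.
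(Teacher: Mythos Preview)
Your proposal is correct and follows essentially the same high-level strategy as the paper: establish that the approximation error between $\tilde L_n$ and $L_n$ (and between $\tilde U_n$ and $U_n$) is asymptotically negligible, then invoke the Self--Liang boundary MLE framework for the half-space constraint $h\ge 0$.

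The one notable difference is in how the reduction is carried out. You separate parts (a) and (b), handling (a) by a direct Taylor expansion of the approximated score equation $\tilde U_n(\hat\btheta_{male,n})=0$, which requires you to control the Hessian approximation $n^{-1}\{\partial\tilde U_n/\partial\btheta^{T}-\partial U_n/\partial\btheta^{T}\}=o_p(1)$ in addition to the score. The paper instead treats (a) and (b) uniformly by comparing $\hat\btheta_{male,n}$ directly to $\hat\btheta_{H_n}=\arg\max_{\btheta\in\Omega}H_n(\btheta)$, the constrained maximizer of the quadratic $H_n(\btheta)=-\{\zeta_n-(\btheta-\btheta_0)\}^{T}I(\btheta_0)\{\zeta_n-(\btheta-\btheta_0)\}+\text{const}$ built with the \emph{true} information matrix. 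A sandwich argument using only $n^{-1/2}\sup|\tilde L_n-L_n|=o_p(1)$ and $n^{-1/2}\sup|\tilde U_n-U_n|=o_p(1)$ (their Lemma~A.1) then gives $\hat\btheta_{male,n}-\hat\btheta_{H_n}=o_p(n^{-1/2})$, after which Self--Liang's Theorem~2 applies directly. This buys a small economy: no second-derivative approximation is needed, so Condition~5 only has to control the quantities entering the likelihood and score, not the Hessian. Your route is perfectly valid, but you should check that Condition~5 as stated actually delivers the Hessian control you invoke, or be prepared to strengthen it slightly.
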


Theorem~{\ref{thm1}} essentially states that when $h_0>0$, if $K_n\to\infty$ fast enough as $n\to\infty$, $\hat{\btheta}_{male,n}$ has the same asymptotic distribution as $\hat{\btheta}_{mle,n}$ and reaches the Cram\'{e}r-Rao efficiency lower bound. This is confirmed by our simulation example 1 in Section~4. When $h_0=0$, asymptotically, $\hat{\btheta}_{male,n}$ has a mixture normal distribution and there are $50\%$ of times that $h$ will be estimated as exactly $0$. In the Supplementary Material, we give a detailed description of how to compute $U_n(\btheta)$ and $I_n(\btheta)$ numerically and thus approximate the information matrix $I(\btheta_0)$ by $I_n(\hat{\btheta}_{male,n})$.

Based on Theorem~1, we can see that it is critical to determine whether $h_0=0$ or not because the limiting distributions of $\hat{\btheta}_{male,n}$ under these two scenarios are radically different. So the next question becomes: can we perform a hypothesis test for $H_0$: $h=0$? Fortunately, the answer is yes. Let $\Omega_0\subseteq \Omega$ be a subset of the parameter space of $\btheta$. We define the approximated likelihood ratio test statistic for testing the null hypothesis $H_0:\btheta\in\Omega_0$ as
\be
\label{alrt}
D_n=-2\{\sup_{\btheta\in\Omega_0}\tilde{L}_{n}(\btheta)-\sup_{\btheta\in\Omega}\tilde{L}_{n}(\btheta)\}.
\ee
In this paper, we consider three types of $\Omega_0$'s: $\Omega_0^{(1)}=\Bbb{R}\times(0,\infty)\times \{0\}\times [0,\infty)$, $\Omega_0^{(2)}=\Bbb{R}\times(0,\infty)\times \Bbb{R}\times\{0\}$ and $\Omega_0^{(3)}=\Bbb{R}\times(0,\infty)\times \{0\}\times\{0\}$, which correspond to testing $H_0^{(1)}:g=0$, $H_0^{(2)}:h=0$, $H_0^{(3)}:h=g=0$, respectively. The following theorem gives the limiting distributions of $D_n$ under these three null hypotheses. The proof is given in the Appendix.
\begin{theo}\label{thm2}(Approximated Likelihood Ratio Tests, ALRT) Under Conditions 1-5 in the Appendix, as $n\to\infty$ and $K_n\to\infty$,
\begin{enumerate}
\item Under $H_0^{(1)}: g=0$, one has $D_n\to D\sim \chi^2_1$ in distribution;
\item Under $H_0^{(2)}: h=0$, one has $D_n\to D\sim 0.5\chi^2_0+0.5\chi^2_1$ in distribution;
\item Under $H_0^{(3)}: g=h=0$, one has $D_n\to D\sim 0.5\chi^2_1+0.5\chi^2_2$ in distribution;
\end{enumerate}
where $\chi^2_{df}$ is the chi-square distribution with $df$ degrees of freedom and $0.5\chi^2_{df_1}+0.5\chi^2_{df_2}$ stands for a mixture distribution of $\chi^2_{df_1}$ and $\chi^2_{df_2}$ with $50\%$ of each component. In particular, $\chi^2_0$ represents the distribution with a point mass at $0$, i.e. $P(X=0)=1$ if $X\sim\chi^2_0$.
\end{theo}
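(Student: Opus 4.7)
The plan is to reduce $D_n$ to the corresponding likelihood ratio statistic based on the true log-likelihood $L_n$, and then apply the classical asymptotic theory for likelihood ratio tests, handling the boundary constraint $h\geq 0$ carefully. First, the proof of Theorem~\ref{thm1} already establishes that $\sup_{\btheta}|\tilde{L}_n(\btheta)-L_n(\btheta)|=o_p(1)$ in a shrinking neighborhood of $\btheta_0$ once $b_n/K_n\to 0$ fast enough, and moreover that $\hat{\btheta}_{male,n}$ and the unrestricted MLE are $\sqrt{n}$-equivalent. The same argument applied to the restricted parameter space $\Omega_0$ shows that the restricted maximizers are also $\sqrt{n}$-equivalent. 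Hence it suffices to derive the limit of $D_n^{\ast}=-2\{\sup_{\Omega_0}L_n(\btheta)-\sup_{\Omega}L_n(\btheta)\}$.

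Next, I would perform the standard quadratic expansion of $L_n$ about $\btheta_0$. Writing $\bs_n=U_n(\btheta_0)/\sqrt{n}$ and $I=I(\btheta_0)$, for $\bdelta=\sqrt{n}(\btheta-\btheta_0)$ bounded one has
\[
L_n(\btheta)-L_n(\btheta_0)=\bs_n\trans\bdelta-\tfrac{1}{2}\bdelta\trans I\bdelta+o_p(1),
\]
so that the local behavior of $L_n$ is that of the Gaussian process $Q(\bdelta)=\bs_n\trans\bdelta-\tfrac{1}{2}\bdelta\trans I\bdelta$, with $\bs_n\to N_4(\0,I)$. Completing the square gives the representation $\sup_{\bdelta\in C}Q(\bdelta)=\tfrac{1}{2}\|\bZ\|_I^2-\tfrac{1}{2}\|\bZ-\Pi_C\bZ\|_I^2$, where $\bZ=I^{-1}\bs_n$ and $\Pi_C$ denotes the $I$-norm projection onto the appropriate cone $C$ generated by $\Omega$ (respectively $\Omega_0$) near $\btheta_0$. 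The tangent cone for $\Omega$ at $\btheta_0$ is $\bbR^3\times[0,\infty)$ when $h_0=0$ and $\bbR^4$ when $h_0>0$; the cones for $\Omega_0^{(1)},\Omega_0^{(2)},\Omega_0^{(3)}$ are obtained by additionally imposing the relevant equality constraints.

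For $H_0^{(1)}: g=0$, the value $g_0=0$ lies in the interior of the $g$-coordinate, while the $h$ coordinate is unrestricted under both the null and the alternative (either both include the boundary or neither does). The two cones differ by one linear subspace of codimension $1$, so the standard Wilks argument gives $D_n\to\chi^2_1$. For $H_0^{(2)}: h=0$, the alternative cone is $\bbR^3\times[0,\infty)$ and the null cone is $\bbR^3\times\{0\}$. After profiling out the three nuisance coordinates, the test reduces to comparing $\sup_{h\geq 0}\{s h-\tfrac{1}{2}(I^{44})^{-1}h^2\}$ with $\sup_{h=0}$, where $s$ is the efficient score for $h$ and is asymptotically $N(0,(I^{44})^{-1})$. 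The unrestricted maximizer equals $\hat h=\max(s\cdot I^{44},0)$, which is exactly zero with probability $1/2$; on that event $D_n=0$, and on the complement $D_n$ equals $s^2\cdot I^{44}\sim\chi^2_1$, yielding the mixture $0.5\chi^2_0+0.5\chi^2_1$. This is exactly the Chernoff--Self--Liang boundary result, and it is also consistent with Theorem~\ref{thm1}(b). For $H_0^{(3)}: g=h=0$, the same profiling gives a two-dimensional quadratic optimization over $(g,h)\in\bbR\times[0,\infty)$ against the null point $(0,0)$. Decomposing the projection onto $\bbR\times[0,\infty)$ into the sum of an unconstrained component in the $g$ direction (contributing $\chi^2_1$) and a half-line component in the $h$ direction (contributing $0.5\chi^2_0+0.5\chi^2_1$), and using the independence obtained after whitening by $I$, gives $0.5\chi^2_1+0.5\chi^2_2$.

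The main obstacle is the careful treatment of the boundary case $h_0=0$: one must verify that the Fisher information $I(\btheta_0)$ is finite and positive definite at $h_0=0$, that the quadratic expansion of $L_n$ is valid uniformly over a neighborhood that intersects the half-space $h\geq 0$ nontrivially, and that the approximation $\tilde L_n-L_n=o_p(1)$ survives this refined local analysis. The finiteness of the relevant derivatives at $h_0=0$ is guaranteed by the explicit form of $\varphi_{\btheta}(z_p)$ in~(\ref{density1}) together with Conditions 1--5 in the Appendix, and the cone-maximization arguments follow Self \& Liang (1987); once those ingredients are in place, the mixture distributions follow by the standard decomposition described above.
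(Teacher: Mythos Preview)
Your approach is essentially the same as the paper's: reduce $D_n$ to the true likelihood ratio statistic $D_n^*=-2\{\sup_{\Omega_0}L_n-\sup_{\Omega}L_n\}$ and then invoke Self \& Liang (1987) for the boundary asymptotics. One technical slip: Lemma~\ref{lema1} only delivers $n^{-1/2}\sup_{\btheta}|\tilde L_n(\btheta)-L_n(\btheta)|=o_p(1)$, i.e.\ $\sup|\tilde L_n-L_n|=o_p(\sqrt{n})$, not the $o_p(1)$ bound you claim, so a direct subtraction does not by itself give $D_n-D_n^*=o_p(1)$. The paper closes this gap by the decomposition $D_n=I+II+D_n^*$: part $I$ is $\{\tilde L_n-L_n\}(\hat\btheta_{male,n}^0)-\{\tilde L_n-L_n\}(\hat\btheta_{male,n})$, which by the mean value theorem equals $\{\tilde U_n-U_n\}(\hat\btheta^*)\,(\hat\btheta_{male,n}^0-\hat\btheta_{male,n})=o_p(\sqrt{n})\cdot O_p(n^{-1/2})=o_p(1)$ using the \emph{score} approximation~(\ref{eq2:lema1}); part $II$ uses the $o_p(n^{-1/2})$-equivalence of the MALE and MLE (both to $\hat\btheta_{H_n}$) that you already note. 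Since you explicitly invoke both of these ingredients, the repair is minor: just replace the overstated uniform $o_p(1)$ likelihood bound by the mean-value argument on $\tilde L_n-L_n$ via~(\ref{eq2:lema1}).

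Your cone-projection derivation for the three hypotheses is correct and spells out what the paper simply defers to cases 4--6 of Self \& Liang; in particular, your observation that under $H_0^{(1)}$ the $h$-boundary appears identically in both the null and the alternative cones (hence cancels) is exactly why the limit is $\chi^2_1$ regardless of whether $h_0=0$ or $h_0>0$.
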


The critical values of the mixed $\chi^2$ distribution can be found using the function {\tt qchibarsq} in the R package ``emdbook" (Bolker, 2008). From Theorem~\ref{thm2}, it is interesting that $D_n$ for testing $H_0^{(1)}: g=0$ has the the same limiting distribution, whether or not the value of $h_0$ is $0$. Under the null hypotheses $H_0^{(2)}: h=0$ or $H_0^{(3)}: g=h=0$, the first component of the asymptotic distributions of $D_n$ comes from the cases when $\hat{h}_{male,n}=0$, which should happen, asymptotically£¬ $50\%$ of times by Theorem~{\ref{thm1}}.

\subsection{Finite sample performance of ALRT tests}

Although by Theorem~{\ref{thm1}}, when the sample size $n\to\infty$, the occurrence rate of $h$ being estimated as $0$, denoted by $P_{0,n}$, should approach $0.5$, our empirical studies show that $P_{0,n}$ is almost always bigger than $0.5$ when $n$ is small. In our simulation studies, we noticed that, for a fixed $n$, $P_{0,n}$ is closely related to another quantity, $C_{0,n}=P(S_n<0)$, where $S_n$ is the fourth entry in the vector $n^{-1/2}U_n(\btheta_0)$. As illustrated in Figure~\ref{fig-3} (b), when the sample size $n$ grows, the difference between $P_{0,n}$ and $C_{0,n}$ becomes smaller and smaller. An intuitive explanation is that if $S_n<0$, the approximated likelihood function $\tilde{L}(\btheta)$ cannot be improved by moving the parameter $h$ away from $0$ toward $\infty$, given that the other parameters are fixed at $(\xi_0,\omega_0,g_0)$, and therefore $h=0$ should be the resulting solution. Although the explicit relationship between $P_{0,n}$ and $C_{0,n}$ remains unclear, studying the behavior of $C_{0,n}$ will provide some insights on why $P_{0,n}\to 0.5$ at such a slow rate.

In the special case of $g_0=h_0=0$,  some tedious algebra yields that $S_n=-n^{-1/2}\sum_{i=1}^n(z_i^4/3-z_i^2)$ with $z_i$'s being independent samples from the $N(0,1)$ distribution. By the Central Limit Theorem, $S_n$ converges in distribution to a normal distribution with mean $0$ and thus $C_{0,n}=P(S_n<0)\to 0.5$ as $n\to\infty$. However, even for a large $n$, the finite sample distribution of $S_n$ can still be severely skewed to the left due to the existence of the term $z_i^4$ in the summation. As a result, even for a large $n$, we can still have $C_{0,n}>0.5$. Our simulation studies show that this left-skewness in $S_n$ becomes even more sever when $|g_0|$ deviates from $0$. Because of the association between $P_{0,n}$ and $C_{0,n}$, this partially explains why $P_{0,n}$ is always greater than $0.5$.

Next, we use a small simulation study to illustrate this phenomenon. The data were generated using model~(\ref{TGHdef}) with $\xi_0=\omega_0=3$, $h_0=0$ and $g_0=-0.5,-0.4,\dots,0.4,0.5$. In Figure~\ref{fig-3}, for each $n$, we plot the values of $P_{0,n}$ and $C_{0,n}$ obtained using different values of $g_0$ in a boxplot. We can see that $P_{0,n}$ is always greater than $0.5$ regardless of the values of $g_0$ and $n$ and that it slowly regresses to $0.5$ as $n$ increases. In addition, as $n$ increases, the difference $P_{0,n}-C_{0,n}$ consistently decreases for various values of $g_0$, indicating strong association between $P_{0,n}$ and $C_{0,n}$.
\begin{figure}[ht!]
\centering
\begin{tabular}{cc}
\includegraphics[width=0.40\textwidth]{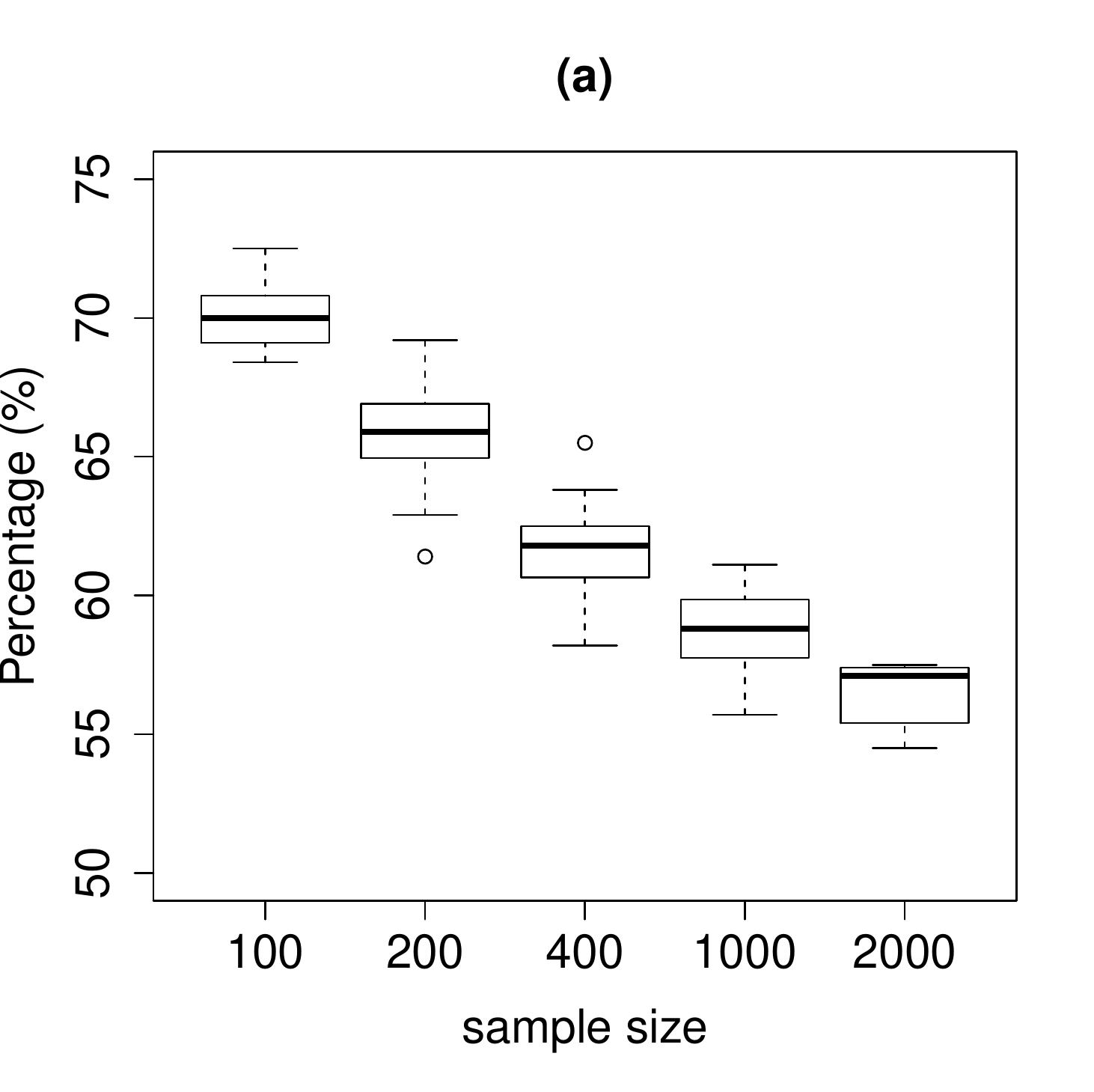} &
\includegraphics[width=0.40\textwidth]{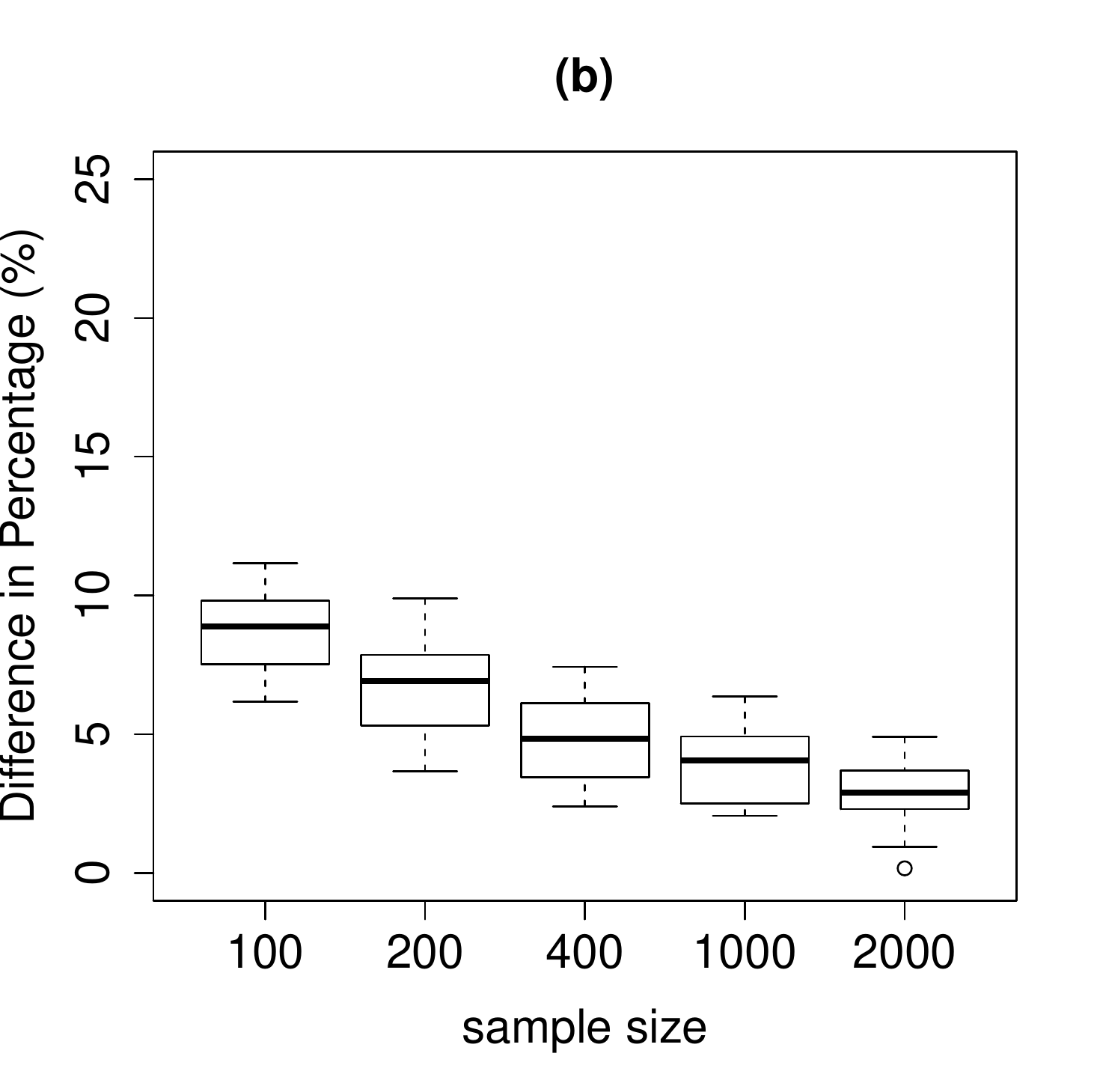}
\end{tabular}
\caption{(a) Percentage of times, $P_{0,n}$, when $\hat{h}_{male,n}=0$ with $h_0=0$; (b) Difference between $P_{0,n}$ and $C_{0,n}$: $P_{0,n}-C_{0,n}$ vs sample size $n$.}
\label{fig-3}
\end{figure}

Such a phenomenon indicates that, for a finite $n$, the critical values based on the asymptotic distributions in Theorem~\ref{thm2} are often too large for the null hypotheses $H_0^{(2)}$ and $H_0^{(3)}$. For example, to test $H_0^{(2)}$, if the estimated value for $h$ is $\hat h_{male,n}=0$ under the alternative hypothesis, the resulting approximate likelihood ratio test statistic $D_n=0$. And if $\hat h_{male,n}=0$ occurs more than $50\%$ of times, then the finite sample distribution of $D_n$ consists of more than $50\%$ of $0$'s when $H_0^{(2)}$ is true, which means we will fail to reject $H_0^{(2)}$ more often than $(1-\alpha)100\%$ of times using the $(1-\alpha)$th quantile of the $0.5\chi^2_0+0.5\chi^2_1$ distribution. Similar arguments apply to $H_0^{(3)}$ too. This explains why the size of the approximated likelihood ratio tests for $H_0^{(2)}$ and $H_0^{(3)}$ are often smaller than the nominal level in the simulation example 2. However, even with such conservativeness, the approximated likelihood ratio tests for $H_0^{(2)}$ and $H_0^{(3)}$ still display significant local powers.

\section{Simulation Studies}

In this section, we use some numerical examples to show the effectiveness of our method. The data were generated using model~(\ref{TGHdef}) with different values of $\btheta_0=(\xi_0,\omega_0,g_0,h_0)^T$ and five sample sizes $n=100,200,400,1000,2000$. All summary statistics were based on $1000$ repetitions and for each repetition, we used $b_n=10$ and $K_n=\max(1000,n)$.

\subsection{Evaluation of the estimation accuracy}

In the simulation example 1, we mainly focused on comparing estimation accuracies for different methods described in Section 2. Three alternative approaches were considered: the letter-value-based method (LV), the quantile least square method (QLS) and the numerical maximum likelihood estimation (NMLE, Rayner \& MacGillivray, 2002a). For the LV method, six quantiles were used: $\bm p_{lv}=(0.005,0.01,0.025,0.05,0.10,0.25)^T$. For the QLS method, 10 quantiles were used with $p_i = \frac{i-1/3}{m+1/3}$, $i=1,\dots, m$ and $m=10$. For the NMLE method, the function {\tt uniroot} form the R package was used to numerically solve $y=\xi+\omega\tau_{g,h}(z_p)$ for $p$.

We set $\btheta_0=(\xi_0,\omega_0,g_0,h_0)^T=(3,3,0.5,0.2)^T$ and the empirical standard errors based on $1000$ simulation runs are summarized in Figure~{\ref{fig-2}}. The asymptotic standard error derived from Theorem~{\ref{thm1}}, which is also the Cram\'{e}r-Rao efficiency lower bound of the maximum likelihood estimator (MLE), is reported as well. We can see that the maximum approximated likelihood estimator (MALE) is uniformly more accurate than both LV and QLS methods for all sample sizes. In terms of estimation efficiencies, the NMLE and MALE methods are almost identical and reach the efficiency lower bounds, which confirms the theoretical findings in Theorem~{\ref{thm1}}. In addition, the LV method did a better job at estimating the parameters $g$ and $h$ than the QLS method did, while it seems to be the reverse for estimating $\xi$ and~$\omega$. Additional simulation results using different values of $\btheta_0$ can be found in the Supplementary Material, where the findings are all consistent with Figure~\ref{fig-2}.
\begin{figure}[h!]
\centering
\begin{tabular}{cc}
\includegraphics[width=0.40\textwidth]{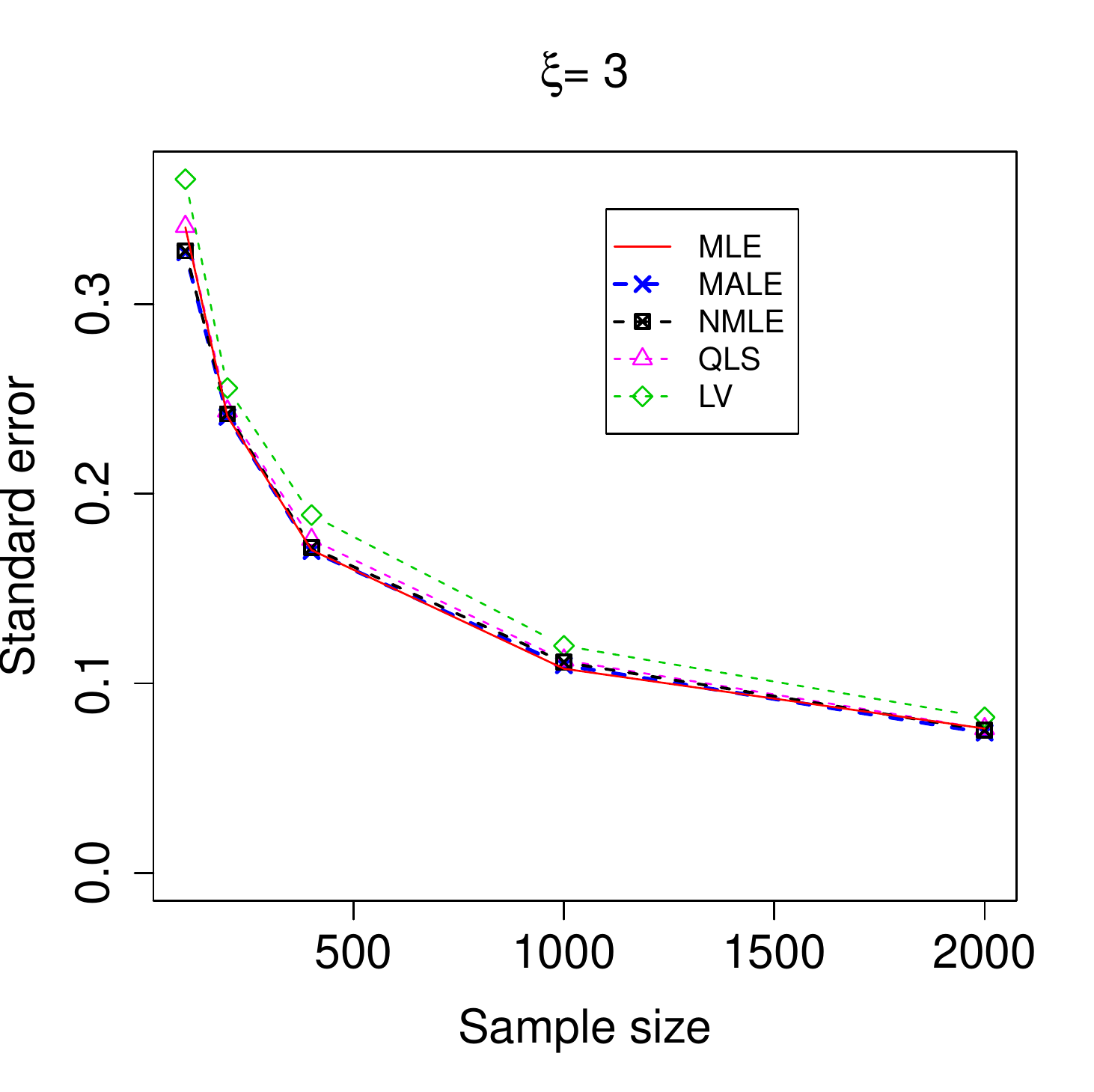} &
\includegraphics[width=0.40\textwidth]{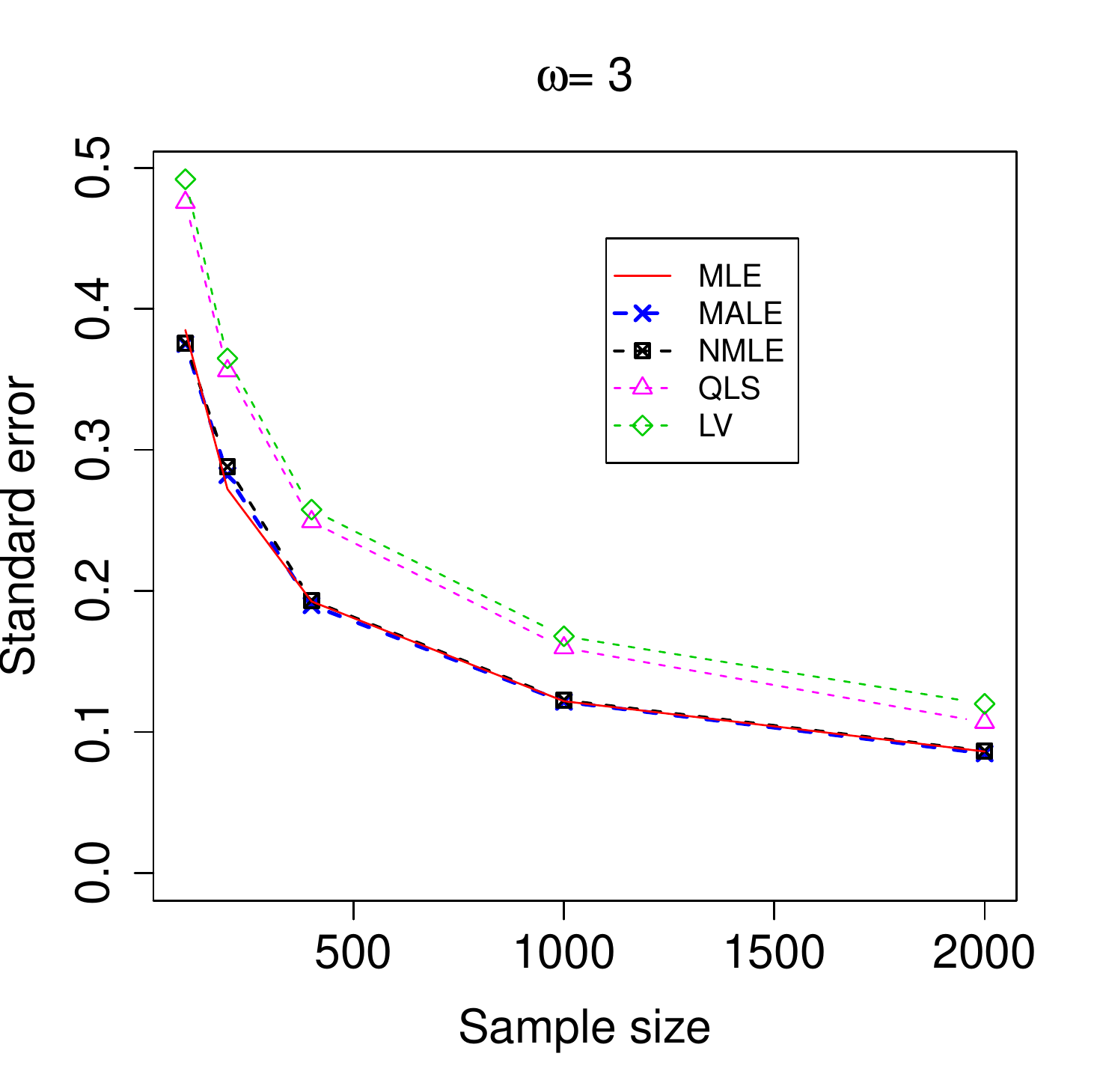}\\
\includegraphics[width=0.40\textwidth]{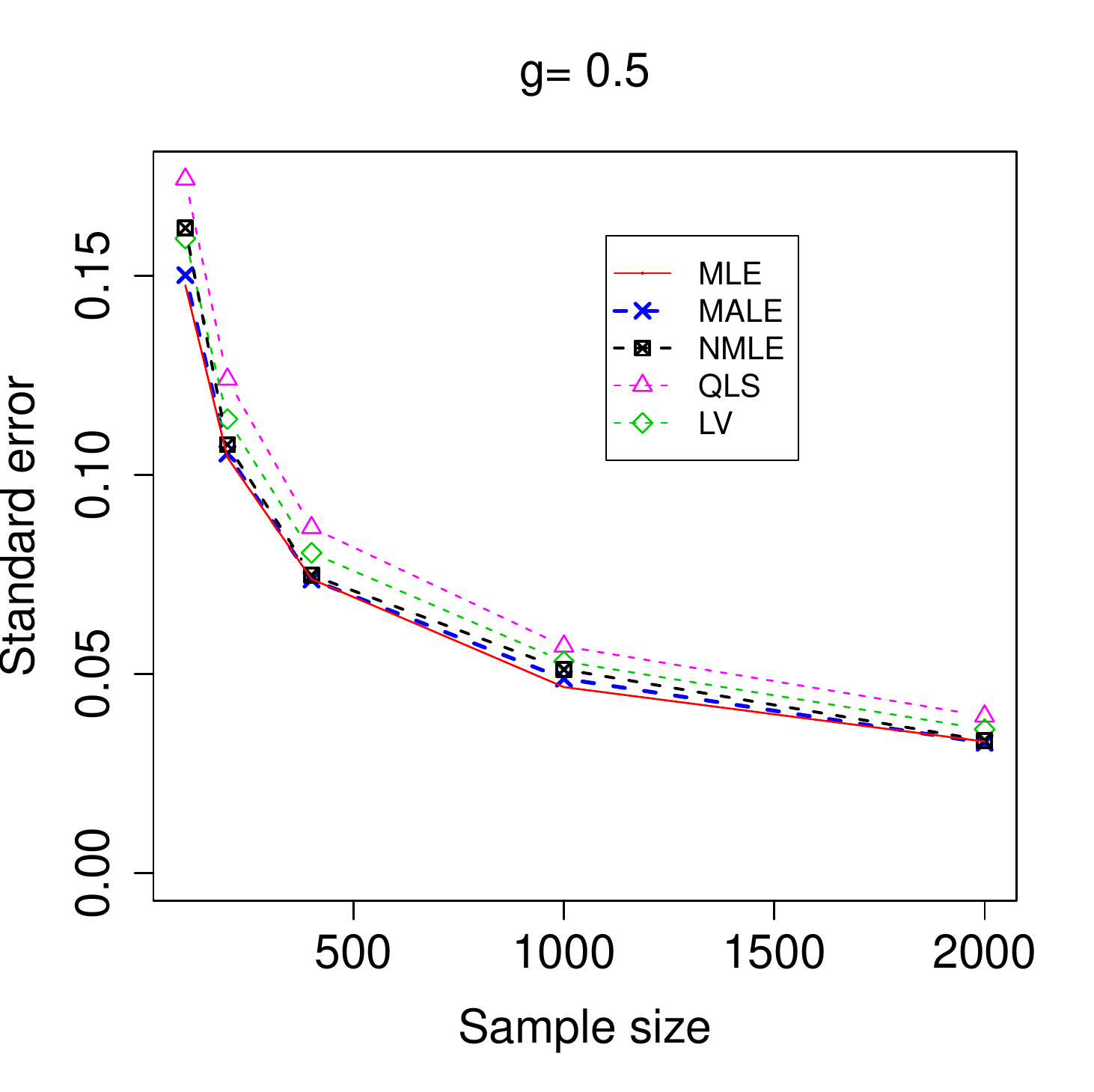} &
\includegraphics[width=0.40\textwidth]{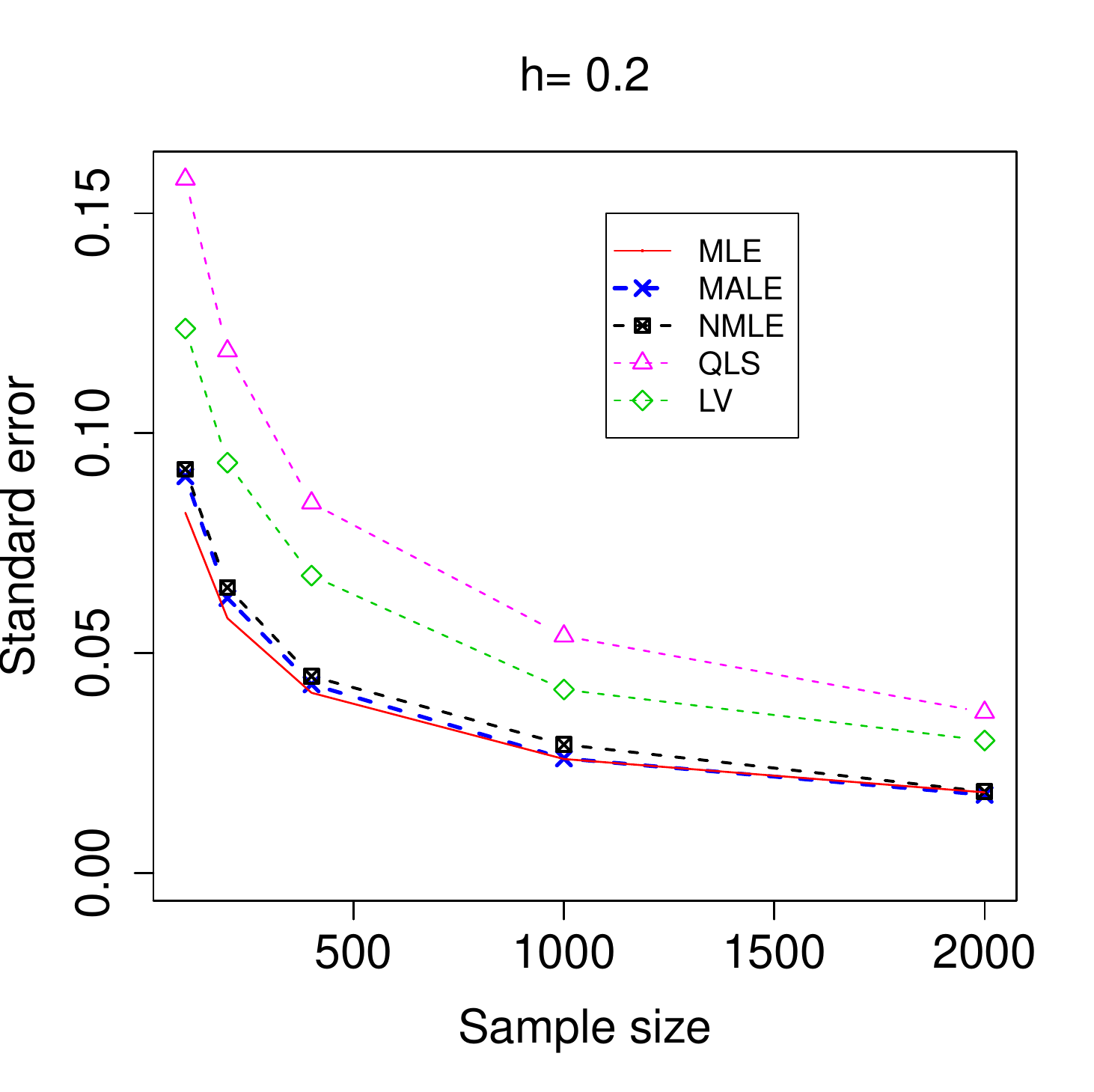}\\
\end{tabular}
\caption{The estimation efficiencies of four methods for Tukey's $g$-and-$h$ distribution; MLE: asymptotic standard error of the maximum likelihood estimator; MALE: maximum approximated likelihood estimator; NMLE: numerical maximum likelihood estimator;  QLS: quantile least square estimator; LV: letter-value-based estimator.}
\label{fig-2}
\end{figure}

Although the MALE and NMLE methods yield roughly the same estimation accuracy, the proposed MALE method is more computationally efficient. To show this, we record their CPU times (in seconds) from the above simulation study. The same initial values and optimization routine were used for both methods. All simulation runs were carried out in the software R on a cluster of 120 commodity Linux machines using a total of 300 CPU cores, with each core running at approximately 2 GFLOPS, or 2 billion floating point operations per second. The results based on the 1000 simulation runs are summarized in Table~\ref{tab-3}, where the first two columns are average CPU times (in seconds) for each simulation run for both methods and the third column is the average of the ratios of CPU times for these two methods applied to the same dataset. From Table~\ref{tab-3} we can see that, when the sample size grows to $n=2000$, our method is on average over 400 times faster than the NMLE method. We also would like to point out that the MALE method converges much faster than the NMLE method for almost all simulation runs, which also contributes to the smaller computation time of the proposed method.

\begin{table}[ht!]
 \caption{Comparisons of computational costs for MALE and NMLE}
 \centering
\begin{tabular}{cccc}
\hline
\hline
&\multicolumn{2}{c}{Average CPU time per run}&Average of CPU time ratio per run\\
\hline
sample size $n$&\hskip 2em MALE&NMLE&NMLE/MALE\\
\hline
100&\hskip 2em0.13&10.86&91.68\\
200&\hskip 2em0.14&22.60&173.80\\
400&\hskip 2em0.17&47.15&301.86\\
1000&\hskip 2em0.29&113.68&445.58\\
2000&\hskip 2em0.53&229.01&464.58\\
\hline
\hline
\end{tabular}
\label{tab-3}
\end{table}

\subsection{Evaluation of approximated likelihood ratio tests}

In the simulation example 2, we study the empirical size and the local power of the proposed approximated likelihood ratio tests. The null and alternative hypotheses are
\begin{enumerate}
\item $H_0^{(1)}: g=0$ vs $H_1: g=d/\sqrt{n}$ with $(d_1,d_2,d_3)=(0,1.5,3)$. The data were generated using model~(\ref{TGHdef}) with $(\xi_0,\omega_0,g_0,h_0)=(3,3,d/\sqrt{n},0.2)$;
\item $H_0^{(2)}: h=0$ vs $H_1: h=d/\sqrt{n}$ with $(d_1,d_2,d_3)=(0,0.5,1.0)$. The data were generated using model~(\ref{TGHdef}) with $(\xi_0,\omega_0,g_0,h_0)=(3,3,0.5,d/\sqrt{n})$;
\item $H_0^{(3)}: g=h=0$ vs $H_1: g=3d/\sqrt{n}, h=d/\sqrt{n}$ with $(d_1,d_2,d_3)=(0,0.5,1.0)$. The data were generated using model~(\ref{TGHdef}) with $(\xi_0,\omega_0,g_0,h_0)=(3,3,3d/\sqrt{n},d/\sqrt{n})$.
\end{enumerate}

To study the ability of the proposed tests in detecting small departures from the null hypotheses, we deliberately set up alternative hypotheses as a function of $\sqrt{n}$, which are commonly referred to as local alternatives in the literature. Studying the power of hypothesis tests under local alternatives is more accurate in characterizing asymptotic behavior of the test statistic and has been widely used, for example, see Zhang (2001) and Xu \& Wang (2010). The empirical powers based on $1000$ simulation runs are summarized in Table~{\ref{tab-1}. For $H_0^{(1)}$ and $H_0^{(3)}$, we see that under the null hypotheses, the empirical sizes are quite close to the nominal sizes. However, under the null hypothesis $H_0^{(2)}$, the empirical sizes are much smaller than the nominal level, even for relatively large $n$. In other words, in such a situation, the approximated likelihood ratio test tend to be conservative, which confirms our discussion in Section 3.2. However, all tests possess significant local powers in detecting very small deviations from the null hypotheses and these local powers increase when the departure from the null hypothesis grows from $d_2/\sqrt{n}$ to $d_3/\sqrt{n}$.
\begin{table}
 \caption{Empirical powers of the approximated likelihood ratio tests in the simulation example~2.}
 \centering
\begin{tabular}{*{2}{c}*{3}{c}rrrrrr}
\hline
           &       &   \multicolumn{3}{c}{$H_0:g=0$}&\multicolumn{3}{c}{$H_0: h=0$}&\multicolumn{3}{c}{$H_0: g=h=0$}\\
\hline
           &       &   \multicolumn{3}{c}{Nominal levels} &\multicolumn{3}{c}{Nominal levels}&\multicolumn{3}{c}{Nominal levels}\\
 $d$       &   $n$  &   10.0    &   5.0 &   1.0 &   10.0    &   5.0 &   1.0 &   10.0    &   5.0 &   1.0 \\
\hline
 $d_1$      &   $100$   &   12.3    &   6.3 &   1.8 &   4.0 &   2.0 &   0.4 &  7.7     & 4.5      & 1.1      \\
            &   $200$   &   11.4    &   5.7 &   1.4 &   4.5 &   2.2 &   0.1 &  9.6     & 4.7      & 0.8      \\
            &   $400$   &   10.4    &   5.0 &   1.6 &   5.8 &   2.4 &   0.5 &  8.7     & 4.0      & 0.4      \\
            &   $1000$   &   9.8    &   3.8 &   0.5 &   5.6 &   2.7 &  0.3 &  8.9     & 3.9      & 0.5      \\
            &   $2000$   &   8.8    &   4.1 &   1.3 &   7.1 &   3.4 &  0.8 &  8.9     & 4.6      & 0.7      \\
 \hline
 $d_2$      &   $100$   &   30.2    &   19.9 &   7.0 &   27.9 &   20.6 &   8.7 &  56.4     & 46.8      & 30.0      \\
            &   $200$   &   27.9    &   19.4 &   5.8 &   33.9 &   25.4 &   10.4 &  58.9     & 47.7      & 29.2      \\
            &   $400$   &   28.0    &   17.6 &   6.3 &   40.4 &   29.0 &   11.6 &  60.5     & 49.3      & 30.3      \\
            &   $1000$  &   28.5    &   19.0 &   7.2 &   47.8 &   33.1 &  15.6 &  64.6     & 53.2      & 32.5      \\
            &   $2000$   &  28.1    &   18.4 &   7.9 &   53.7 &   38.8 &  17.8 &  65.0     & 53.6      & 30.7      \\
 \hline
 $d_3$      &   $100$   &   67.4    &   55.9 &   30.8 &   58.0 &   48.7 &   30.3 &  93.0     & 88.6      & 78.4      \\
            &   $200$   &   68.6    &   57.3 &   32.6 &   68.2 &   58.7 &   41.7 &  95.3     & 93.3      & 84.0      \\
            &   $400$   &   67.7    &   57.9 &   35.9 &   79.5 &   69.7 &   49.8 &  97.2     & 95.2      & 88.4      \\
            &   $1000$  &   70.9    &   59.8 &   35.2 &   83.2 &   74.8 &  58.9 &  98.5     & 96.6      & 91.7      \\
            &   $2000$   &  68.9    &   58.0 &   34.9 &   89.5 &   83.3 &  63.9 &  98.8     & 97.5      & 92.1      \\
 \hline

 \end{tabular}
\label{tab-1}
\end{table}

\section{Air Pollution Data}

In this section, we apply the proposed method to two years of air pollution data obtained from Ledolter \& Hogg (2009, Chapter 1, Section 1.4), which can be downloaded at the website http://www.biz.uiowa.edu/faculty/jledolter/AppliedStatistics/. In each of 1976 and 1977, $64$  weekday afternoon lead concentrations, in micrograms/m$^3$, were collected near the San Diego Freeway in Los Angeles. The same datasets were also used by Rayner \& MacGillivray (2002a, 2002b). In Figure~\ref{fig-4}, the histogram and the density plots show possible skewness and some outliers in both datasets. Following Rayner \& MacGillivray (2002b), we use Tukey's $g$-and-$h$ distribution to fit both datasets.

\begin{figure}[ht!]
\centering
\begin{tabular}{cc}
\includegraphics[width=0.45\textwidth]{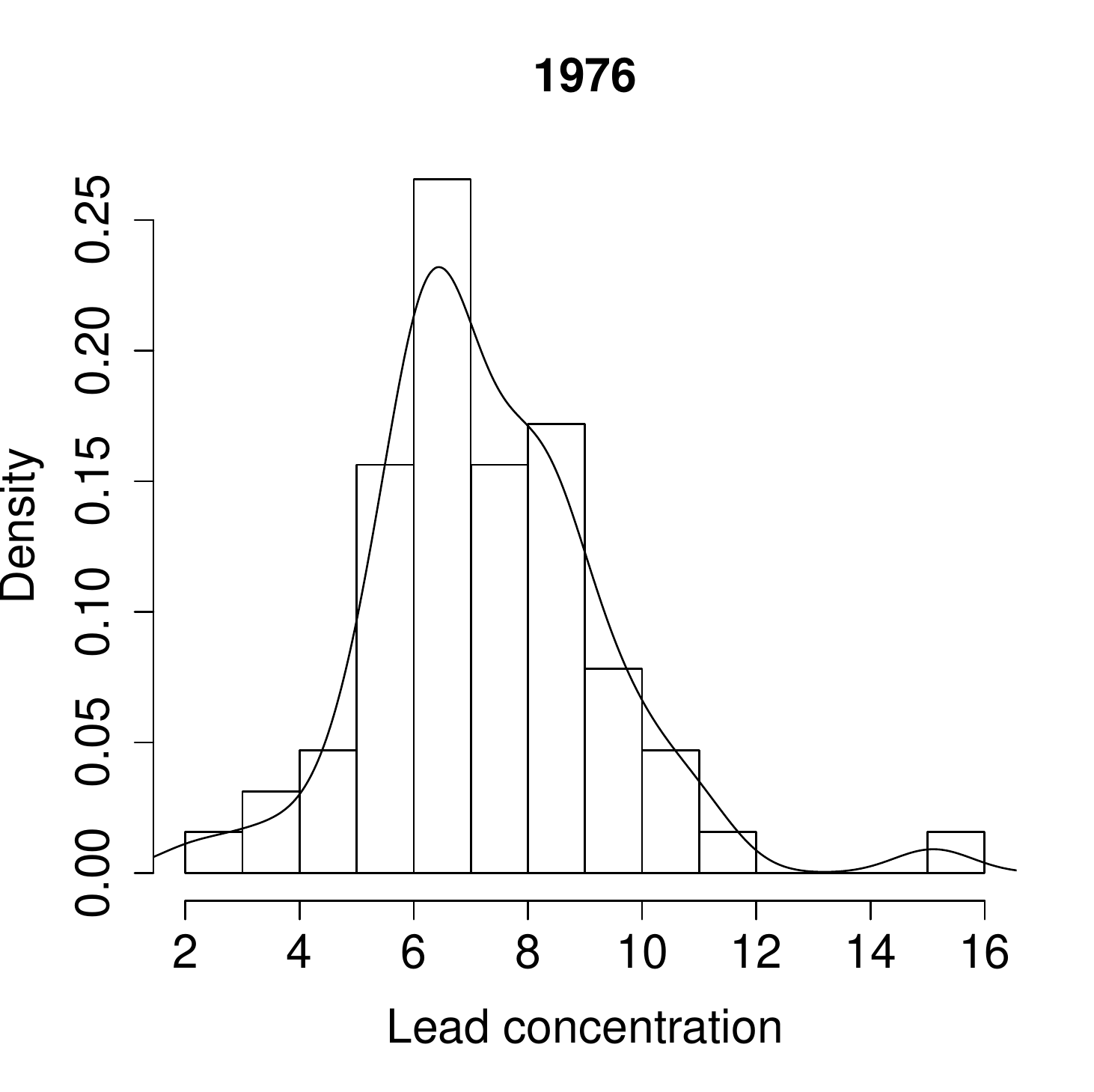} &
\includegraphics[width=0.45\textwidth]{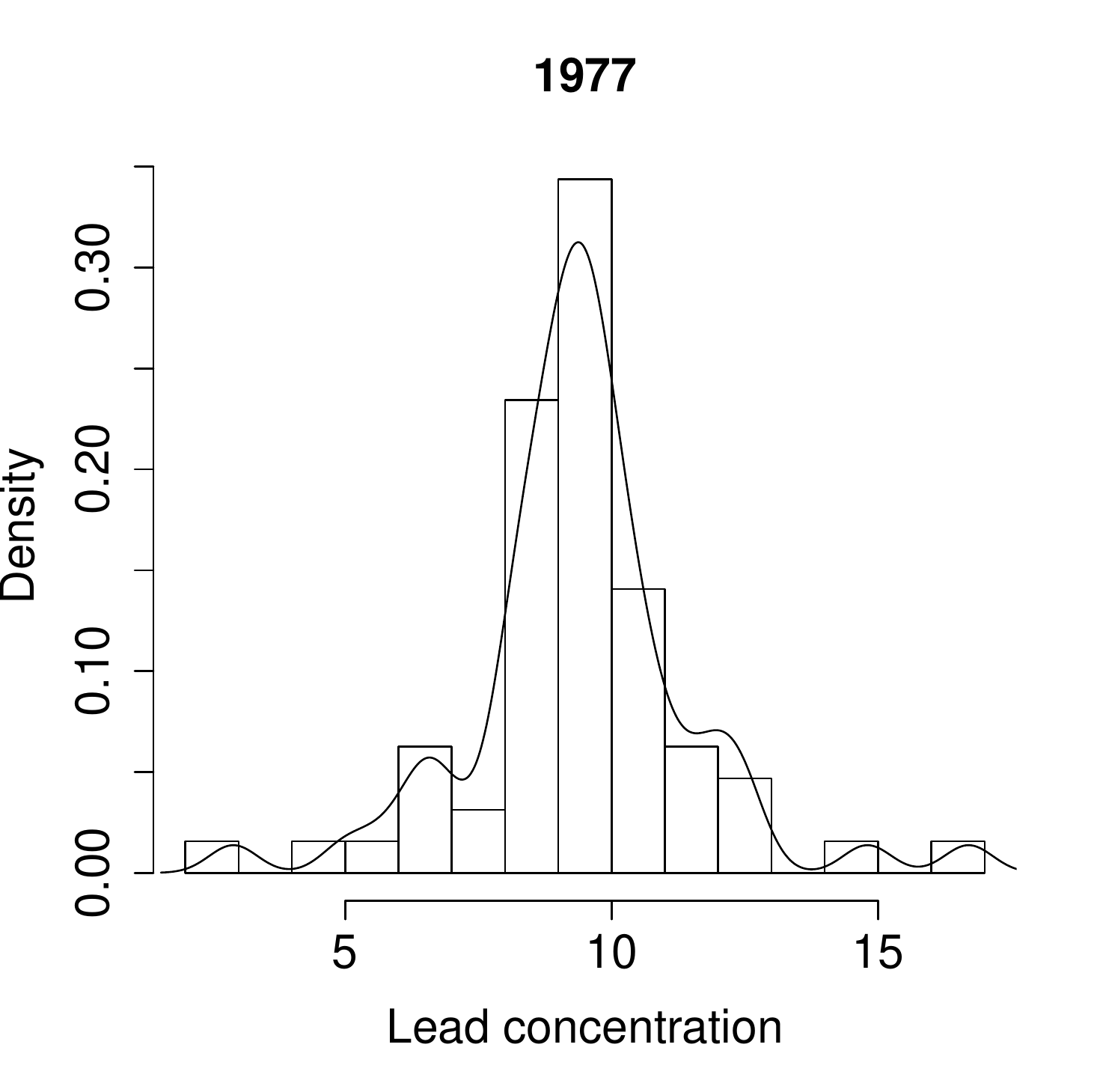}
\end{tabular}
\caption{Histograms of lead concentrations in 1976 and 1977.}
\label{fig-4}
\end{figure}

For parameter estimation, the four methods described in Section 4.1 were used. All estimates are summarized in Table~\ref{tab-2}. As we can see, the estimates $\hat{\btheta}_{male,n}$ and $\hat{\btheta}_{nmle,n}$ are identical. This is not surprising because both of these methods should yield estimators that are very close to the true maximum likelihood estimator. Although $\hat{\btheta}_{lv,n}$ and $\hat{\btheta}_{qls,n}$ yield consistent estimates of $\xi$, $\omega$ and $g$, compared to $\hat{\btheta}_{male,n}$ and $\hat{\btheta}_{nmle,n}$, they produce very different estimates for $h$. For example, for 1977, $\hat{\btheta}_{qls,n}$ gives $\hat h_{qls,n}=0.49$, which is very questionable because with this value the fitted distribution basically has an infinite variance.

We also ran a series of hypothesis tests using the approximated likelihood ratio test statistics and the results are summarized in Table~\ref{tab-2}, where $C_{0.95}$ stands for the critical value derived from the asymptotic distribution given in Theorem~\ref{thm2} for each test at the $0.05$ significance level. Based on these tests, there is significant evidence that $h_0$ is not $0$ but it seems reasonable to set $g=0$ for both datasets. This indicates that for both datasets, the data do not have a normal distribution but a heavy-tailed symmetric distribution.

\begin{table}[ht!]
 \caption{Estimates and hypothesis tests of lead concentration data.}
 \centering
\begin{tabular}{llll}
\hline
&Methods& Year 1976&Year 1977\\
\hline
&$\hat{\btheta}_{male,n}$&$(7.11,1.60,0.19,0.12)$&$(9.38,1.22 ,0.04, 0.34)$\\
Estimation&$\hat{\btheta}_{nmle,n}$&$( 7.11, 1.60, 0.19 ,0.12)$&$(9.38, 1.22, 0.04, 0.34)$\\
$(\hat\xi,\hat\omega,\hat g,\hat h)$&$\hat{\btheta}_{lv,n}$&$(6.95, 1.62, 0.25, 0.07)$&$(9.40,1.33, 0.04, 0.21)$\\
&$\hat{\btheta}_{qls,n}$&$(7.12,1.62, 0.24, 0.01)$&$(9.41, 1.08, 0.01, 0.49)$\\
\hline
Hypothesis&$H_0$: $g=h=0$&$(7.26,5.14)$, Reject;&$(14.34, 5.14)$, Reject;\\
Tests&$H_0$: $h=0$&$(4.46, 2.71)$, Reject;& $(14.05, 2.71)$, Reject;\\
$(D_n,C_{0.95})$&$H_0$: $g=0$&$(1.54, 3.84)$, Fail to reject.&$(0.03, 3.84)$, Fail to reject.\\
\hline
\end{tabular}
\label{tab-2}
\end{table}

\section{Conclusion}

In this paper, we proposed a computationally efficient estimation method for Tukey's $g$-and-$h$ distribution. The proposed maximum approximated likelihood estimator has the same limiting distribution as the true maximum likelihood estimator and the resulting approximated likelihood ratio tests can be used to test interesting hypotheses involving the two shape parameters $g$ and $h$. The performances of the proposed method have been demonstrated through extensive simulation studies and an application to a air pollution data.

There are several ways to extend the current results. Tukey's $g$-and-$h$ distribution can be used as the residual process of a regression model to conduct robust model selection and model averaging (Xu et al., 2014). Another interesting application is to conduct model selection for an autoregressive model with an infinite variance as in Xu et al. (2012), where Tukey's $g$-and-$h$ distribution can be used to replace the stable distribution as the innovation process. It would also be interesting to conduct some theoretical investigations on the local powers of the ALRT tests in Section 3 and how to construct valid confidence intervals for $g$ and $h$, without knowing the true value $h_0=0$ or not.

\baselineskip=23pt

\section*{Appendix: Technical Proofs}
\setcounter{lem}{0}

\setcounter{equation}{0}

\renewcommand{\theequation}{A.\arabic{equation}}
\renewcommand{\thelem}{A.\arabic{lem}}

We first define some functions $C_1(z,\btheta)=\frac{\partial \varphi_{\btheta}(z_p)}{\partial^{\dag}z_p }\large|_{z_p=z}$, $C_2(z,\btheta)=\frac{\partial^2 \varphi_{\btheta}(z_p)}{\partial\btheta \partial^{\dag}z_p }\large|_{z_p=z}$, $C_3(z,\btheta)=\frac{\tau_{g,h}''(z)}{\tau_{g,h}'(z)}\frac{\partial \varphi_{\btheta}(z_p)}{\partial^{\dag} z_p}\frac{\partial z_p}{\partial \btheta}\large |_{z_p=z}$, $C_4(z,\btheta)=\frac{1}{K_n}\frac{\partial \log\tau_{g,h}'(z)}{\partial\btheta}\frac{\partial \varphi_{\btheta}(z_p)}{\partial^{\dag} z_p}\large |_{z_p=z}$, $C_5(z,\btheta)=\frac{\partial \varphi_{\btheta}(z_p)}{\partial^{\dag} z_p}\frac{\partial^2 z_p}{\partial \btheta\partial^{\dag}z_p}\large |_{z_p=z}$. We list some technical conditions needed for proofs of Theorem~\ref{thm1}-\ref{thm2}, where Conditions 2-3 are taken from Self \& Liang (1987).\\
{\it Condition 1}: There exists an open neighborhood of $\btheta_0$, denoted by $\Theta_{0,n}\subseteq \Bbb{R}^4$, such that $\quad \textsf{Y}_{1,\btheta}\leq y_{\min}<y_{\max}\leq \textsf{Y}_{K_n,\btheta}$ with probability 1 for all $\btheta\in\Theta_{0,n}$;\\
{\it Condition 2}: The first three derivatives of $L_n(\btheta)$ with respect to $\btheta$ exist with probability $1$ on $\Theta_{0,n}\cap \Omega$;\\
{\it Condition 3}: The third derivative of $n^{-1}L_n(\btheta)$ is bounded by some function $\psi_n(\bY)$ with $|E\{\psi_n(\bY)\}|<\infty$;\\
{\it Condition 4}: $I(\btheta)=E\{I_{n}(\btheta)\}$ is positive definite on $\Theta_{0,n}$ and $I(\btheta_0)=\mbox{var}\left\{n^{-1/2}U_n(\btheta_0)\right\}$;\\
{\it Condition 5}: $\frac{2\sqrt{n}b_n}{K_n}\sup_{\btheta\in\Theta_{0,n}\cap\Omega,z\in [-b_n,b_n]}\left|C_j(z,\btheta)\right|\to0$ for $j=1,\dots,5$ as $n\to\infty$ and $K_n\to\infty$.
\begin{lem}
\label{lema1}
Under Conditions 1-5, as $K_n\to\infty$ and $n\to\infty$, we have that
\begin{eqnarray}
\frac{1}{\sqrt{n}}\sup_{\btheta\in\Theta_{0,n}\cap\Omega}|\tilde{L}_n(\btheta)-L_n(\btheta)|&=&o_p(1),\label{eq1:lema1} \\
\frac{1}{\sqrt{n}}\sup_{\btheta\in\Theta_{0,n}\cap\Omega}|\tilde{U}_n(\btheta)-U_n(\btheta)|&=&o_p(1).\label{eq2:lema1}
\end{eqnarray}
\end{lem}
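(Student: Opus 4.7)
The plan is to reduce each statement to per-observation pointwise bounds of order $2b_n/K_n$ times one of the quantities $|C_1|,\ldots,|C_5|$, so that after summation over $i=1,\ldots,n$ and division by $\sqrt n$ the result is a finite sum of terms of the form $(2\sqrt n\, b_n/K_n)\sup|C_j|$, each of which Condition~5 forces to zero.

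For equation~(\ref{eq1:lema1}), I would apply the one-variable mean value theorem to the map $z\mapsto \varphi_\btheta(z)$ on the segment between $z_{p_i}$ and $\tilde z_{p_i,k_i}$. The knot geometry gives $|\tilde z_{p_i,k_i}-z_{p_i}|\le 2b_n/K_n$, so the per-observation contribution is at most $(2b_n/K_n)\sup_{z\in[-b_n,b_n]}|C_1(z,\btheta)|$; summing and dividing by $\sqrt n$ gives exactly the quantity Condition~5 (with $j=1$) sends to zero, uniformly on $\Theta_{0,n}\cap\Omega$.

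Equation~(\ref{eq2:lema1}) is more delicate because $\tilde z_{p,k}$ depends on $\btheta$ through the knots $\textsf Y_{k,\btheta}=\xi+\omega\tau_{g,h}(\textsf Z_k)$, while $z_p$ depends on $\btheta$ implicitly through $y=\xi+\omega\tau_{g,h}(z_p)$. Expanding each total $\btheta$-derivative by the chain rule as $(\partial\varphi_\btheta/\partial\btheta)(z)+(\partial\varphi_\btheta/\partial^\dag z_p)(z)\cdot(\partial z/\partial\btheta)$, I would decompose each summand of $\tilde U_n(\btheta)-U_n(\btheta)$ into the three pieces
\[
\underbrace{\tfrac{\partial\varphi_\btheta}{\partial\btheta}(\tilde z_{p,k})-\tfrac{\partial\varphi_\btheta}{\partial\btheta}(z_p)}_{\text{(I)}}
+\underbrace{\Bigl[\tfrac{\partial\varphi_\btheta}{\partial^\dag z_p}(\tilde z_{p,k})-\tfrac{\partial\varphi_\btheta}{\partial^\dag z_p}(z_p)\Bigr]\tfrac{\partial z_p}{\partial\btheta}}_{\text{(II)}}
+\underbrace{\tfrac{\partial\varphi_\btheta}{\partial^\dag z_p}(\tilde z_{p,k})\Bigl[\tfrac{\partial\tilde z_{p,k}}{\partial\btheta}-\tfrac{\partial z_p}{\partial\btheta}\Bigr]}_{\text{(III)}}.
\]
Terms (I) and (II) yield to the same mean value theorem argument as in the first statement, producing per-observation bounds $(2b_n/K_n)\sup|C_2|$ and $(2b_n/K_n)\sup|C_5|$, respectively.

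The main obstacle is term~(III). I would compute $\partial z_p/\partial\btheta$ by implicit differentiation of $y=\xi+\omega\tau_{g,h}(z_p)$, obtaining a rational expression with denominator $\omega\tau_{g,h}'(z_p)$, and compute $\partial\tilde z_{p,k}/\partial\btheta$ directly from the explicit linear formula~(\ref{zhat}), which produces the same rational shape but with $\tau_{g,h}'(z_p)$ replaced by the secant slope $[\tau_{g,h}(\textsf Z_{k+1})-\tau_{g,h}(\textsf Z_k)]/(\textsf Z_{k+1}-\textsf Z_k)$ and with $\partial\tau_{g,h}/\partial\btheta$ replaced by its corresponding finite difference. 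A Taylor expansion on the interval of length $2b_n/K_n$ identifies the dominant error as $(b_n/K_n)\cdot\tau_{g,h}''(z)/\tau_{g,h}'(z)$ times $\partial z_p/\partial\btheta$, and after multiplication by $\partial\varphi_\btheta/\partial^\dag z_p$ this reproduces exactly the integrand defining $C_3$. The secondary error, coming from finite-differencing $\partial\tau_{g,h}/\partial\btheta=\partial\log\tau_{g,h}'/\partial\btheta\cdot\tau_{g,h}'$ inside the denominator, matches the functional form $(\partial\log\tau_{g,h}'/\partial\btheta)(\partial\varphi_\btheta/\partial^\dag z_p)$, whose scaling by $1/K_n$ is already absorbed into the definition of $C_4$. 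Collecting the five resulting per-observation bounds, summing over $i$, dividing by $\sqrt n$, and invoking Condition~5 for $j=1,\ldots,5$ closes the argument. The only nontrivial calculation is the Taylor matching in~(III); everything else is a bookkeeping exercise around the mean value theorem.
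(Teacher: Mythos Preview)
Your strategy is the same as the paper's: reduce both claims to per-observation errors of size $2b_n/K_n$ times one of the $|C_j|$'s, sum, and invoke Condition~5. For~(\ref{eq1:lema1}) your argument matches the paper exactly. For~(\ref{eq2:lema1}) the paper uses a slightly different grouping than yours: it writes
\[
u_\btheta(z_p)-\tilde u_\btheta(z_p)=\bigl\{u_\btheta(z_p)-u_\btheta(\tilde z_{p,k})\bigr\}+w_\btheta(\tilde z_{p,k})\bigl\{x_\btheta(\tilde z_{p,k})-\tilde x_\btheta(z_p)\bigr\},
\]
so that a single mean-value step on the first brace lands \emph{exactly} on $C_2=\partial^2\varphi_\btheta/\partial\btheta\,\partial^\dag z_p$ (the full $\btheta$-derivative), and the second brace, after a further split $\tilde x_\btheta(z_p)-x_\btheta(\tilde z_{p,k})=\{\tilde x_\btheta(z_p)-x_\btheta(z_p)\}+\{x_\btheta(z_p)-x_\btheta(\tilde z_{p,k})\}$, produces precisely $C_3$, $C_4$ and $C_5$.

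Your three-term split is algebraically equivalent, but the per-term labels are off. Mean-value on your (I) yields $\partial^2\varphi_\btheta/\partial^\dag\btheta\,\partial^\dag z_p$, not $C_2$ (the latter carries the \emph{total} $\btheta$-derivative, which also contains the $w\cdot\partial x/\partial^\dag z_p$ piece). Mean-value on your (II) yields $(\partial^2\varphi_\btheta/(\partial^\dag z_p)^2)\cdot\partial z_p/\partial\btheta$, not $C_5=w\cdot\partial^2 z_p/\partial\btheta\,\partial^\dag z_p$. Your (III) does correctly produce $C_3$ and $C_4$, but $C_5$ never appears in your scheme as written. Since Condition~5 is stated for the specific functions $C_1,\ldots,C_5$, you either need to regroup as the paper does---lump (I)$+$(II) together with the missing $w\{x(\tilde z)-x(z_p)\}$ term to recover $u(z_p)-u(\tilde z)$ and hence $C_2$, and let $C_5$ re-enter through $x(z_p)-x(\tilde z)$ inside the analysis of (III)---or observe that your (I)$+$(II) derivative equals $C_2-C_5$ and bound it by $|C_2|+|C_5|$. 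Either fix is routine; the substance of your argument is correct.
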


\begin{proof} For a fixed $\btheta\in\Theta_{0,n}\cap\Omega$, let $z_p=\tau_{g,h}^{-1}(\frac{y-\xi}{\omega})$ and suppose $\textsf{Z}_k\leq z_p <\textsf{Z}_{k+1}$ for some $1\leq k\leq K_n$. Then we can define the corresponding $\tilde{z}_{p,k}$ as in (\ref{zhat}) such that $|z_p-\tilde{z}_{p,k}|\leq 2b_n/K_n$, which implies that
\[
\left|\varphi_{\btheta}(z_p)-\tilde{\varphi}_{\btheta}(z_p)\right|\leq \frac{2b_n}{K_n}\sup_{z_p\in [-b_n,b_n]}\left|\frac{\partial \varphi_{\btheta}}{\partial^{\dag} z_p}(z_p)\right|.
\]
Then, it is straightforward to show that, under Conditions 1 and 5, we have
\[
\begin{split}
\frac{1}{\sqrt{n}}\sup_{\btheta\in\Theta_{0,n}\cap\Omega}|\tilde{L}_n(\btheta)-L_n(\btheta)|&\leq \sup_{\btheta\in\Theta_{0,n}\cap\Omega}\frac{1}{\sqrt{n}}\sum_{i=1}^n|\varphi_{\btheta}(z_{p_i})-\tilde{\varphi}_{\btheta}(z_{p_i})|
\\&\leq \frac{2\sqrt{n}b_n}{K_n}\sup_{\btheta\in\Theta_{0,n}\cap\Omega,z_p\in [-b_n,b_n]}|C_1(z_p,\btheta)|=o_p(1).
\end{split}
\]
By definition, we can rewrite (\ref{zhat}) as
\[
\tilde{z}_{p,k}=\textsf{Z}_k+\frac{y-\textsf{Y}_{k,\btheta}}{\textsf{Y}_{k+1,\btheta}-\textsf{Y}_{k,\btheta}}(\textsf{Z}_{k+1}-\textsf{Z}_k)=\textsf{Z}_k+\frac{\tau_{g,h}(z_p)-\tau_{g,h}(\textsf{Z}_k)}{\tau_{g,h}(\textsf{Z}_{k+1})-\tau_{g,h}(\textsf{Z}_k)}(\textsf{Z}_{k+1}-\textsf{Z}_k).
\]
Applying the chain rule of differentiation, we have
\[
u_{\btheta}(z_p)=\frac{\partial \varphi_{\btheta}(z_p)}{\partial \btheta}=\frac{\partial \varphi_{\btheta}(z_p)}{\partial^{\dag} \btheta}+\frac{\partial \varphi_{\btheta}(z_p)}{\partial^{\dag} z_p}\frac{\partial z_p}{\partial \btheta}=v_{\btheta}(z_p)+w_{\btheta}(z_p)x_{\btheta}(z_p),
\]
where we denote $v_{\btheta}(z_p)=\frac{\partial \varphi_{\btheta}(z_p)}{\partial^{\dag} \btheta}$, $w_{\btheta}(z_p)=\frac{\partial \varphi_{\btheta}(z_p)}{\partial^{\dag} z_p}$ and $x_{\btheta}(z_p)=\frac{\partial z_p}{\partial \btheta}$. Similarly, we have
\[
\tilde{u}_{\btheta}(z_p)=\frac{\partial \varphi_{\btheta}(\tilde{z}_{p,k})}{\partial \btheta}=\frac{\partial \varphi_{\btheta}(\tilde{z}_{p,k})}{\partial^{\dag} \btheta}+\frac{\partial \varphi_{\btheta}(\tilde{z}_{p,k})}{\partial^{\dag} \tilde{z}_{p,k}}\frac{\partial \tilde{z}_{p,k}}{\partial \btheta}=v_{\btheta}(\tilde{z}_{p,k})+w_{\btheta}(\tilde{z}_{p,k})\tilde{x}_{\btheta}(z_p),
\]
where we denote $\tilde{x}_{\btheta}(z_p)=\frac{\partial \tilde{z}_{p,k}}{\partial \btheta}$.
Straightforward algebra yields
\be
\label{a1}
\begin{split}
u_{\btheta}(z_p)-\tilde{u}_{\btheta}(z_p)&=u_{\btheta}(z_p)-u_{\btheta}(\tilde{z}_{p,k})+w_{\btheta}(\tilde{z}_{p,k})\left\{x_{\btheta}(\tilde{z}_{p,k})-\tilde{x}_{\btheta}(z_p)\right\}\\
&=\frac{\partial^2 \varphi_{\btheta}}{\partial\btheta \partial^{\dag}z_p }(z_{p,1}^*)(z_p-\tilde{z}_{p,k})+w_{\btheta}(\tilde{z}_{p,k})\left\{x_{\btheta}(\tilde{z}_{p,k})-\tilde{x}_{\btheta}(z_p)\right\},
\end{split}
\ee
where $z_{p,1}^*\in [\textsf{Z}_{k}, \textsf{Z}_{k+1})$. Furthermore, we can show that
\be
\label{a2}
\begin{split}
\tilde{x}_{\btheta}(z_p)-x_{\btheta}(\tilde{z}_{p,k})&=\tilde{x}_{\btheta}(z_p)-x_{\btheta}(z_p)+x_{\btheta}(z_p)-x_{\btheta}(\tilde{z}_{p,k})=\frac{\partial \tilde{z}_{p,k}}{\partial \btheta}-x_{\btheta}(z_p)+x_{\btheta}(z_p)-x_{\btheta}(\tilde{z}_{p,k})\\&=\frac{\partial \tilde{z}_{p,k}}{\partial^{\dag} z_p}\frac{\partial z_p}{\partial \btheta}+\frac{\partial \tilde{z}_{p,k}}{\partial^{\dag} \btheta}-x_{\btheta}(z_p)+x_{\btheta}(z_p)-x_{\btheta}(\tilde{z}_{p,k})\\&=\left\{\frac{\tau_{g,h}'(z_p)}{\tau_{g,h}(\textsf{Z}_{k+1})-\tau_{g,h}(\textsf{Z}_k)}(\textsf{Z}_{k+1}-\textsf{Z}_k)-1\right\}\frac{\partial z_p}{\partial \btheta}+\frac{\partial \tilde{z}_{p,k}}{\partial^{\dag} \btheta}+x_{\btheta}(z_p)-x_{\btheta}(\tilde{z}_{p,k})
\\&=\frac{\tau_{g,h}''(z_{p,3}^*)(z_{p}-z_{p,2}^*)}{\tau_{g,h}'(z_{p,2}^*)}\frac{\partial z_p}{\partial \btheta}+\frac{\frac{\partial \tau_{g,h}'}{\partial\btheta}(z_{p,4}^*)(z_{p,5}^*-z_{p,6}^*)}{\tau_{g,h}'(z_{p,2}^*)K_n}+\frac{\partial^2 z_p}{\partial \btheta\partial^{\dag}z_p}(z_{p,7}^*)(z_p-\tilde{z}_{p,k}),
\end{split}
\ee
where $\textsf{Z}_k\leq z_{p,i}^*\leq \textsf{Z}_{k+1}$ with $2\leq i\leq 7$. Then, combining~(\ref{a1}) and (\ref{a2}), under Condition~5 and the fact that $\textsf{Z}_{k+1}-\textsf{Z}_k=2b_n/K_n$, we have
\[
\sup_{\btheta\in\Theta_{0}\cap\Omega,z_p\in [-b_n,b_n]}\sqrt{n}|u_{\btheta}(z_p)-\tilde{u}_{\btheta}(z_p)|=o_p(1),
\]
which further leads to, under Conditions~1 and 5, that
\[
\begin{split}
\frac{1}{\sqrt{n}}\sup_{\btheta\in\Theta_{0}\cap\Omega}|\tilde{U}_n(\btheta)-U_n(\btheta)|&\leq \sup_{\btheta\in\Theta_{0,n}\cap\Omega}\frac{1}{\sqrt{n}}\sum_{i=1}^n|u_{\btheta}(z_{p_i})-\tilde{u}_{\btheta}(z_{p_i})|
\\&\leq \sup_{\btheta\in\Theta_{0,n}\cap\Omega,z_p\in [-b_n,b_n]}\sqrt{n}|u_{\btheta}(z_{p})-\tilde{u}_{\btheta}(z_{p})|=o_p(1).
\end{split}
\]
\end{proof}

Before proceeding to proofs of Theorem~\ref{thm1}-\ref{thm2}, we first define quantities $\zeta_n=n^{-1}I^{-1}(\btheta_0)U_n(\btheta_0)$ and $H_n(\btheta)=-\{\zeta_n-(\btheta-\btheta_0)\}^TI(\btheta_0)\{\zeta_n-(\btheta-\btheta_0)\}+n^{-2}U_n^T(\btheta_0)^TI^{-1}(\btheta_0)U_n^T(\btheta_0)$.

\begin{proof}[Proof of Theorem~\ref{thm1}]
Under Conditions 2--4, by Lemma~1 in Self \& Liang (1987), for any $\btheta$ such that $\sqrt{n}(\btheta-\btheta_0)=O(1)$, we have
\be
\label{a3}
\frac{2}{n}\{L_n(\btheta)-L_n(\btheta_0)\}=H_n(\btheta)+R_n(\btheta),
\ee
where $R_n(\btheta)=O_p(1)||\btheta-\btheta_0||^3$.

Following the same arguments in the proof of Theorem 1 in Self \& Liang (1987) and using the equation~(\ref{eq1:lema1}) in Lemma~{\ref{lema1}}, we can show that there exists a sequence $\hat{\btheta}_{male,n}\in\Omega$ such that $|\hat{\btheta}_{male,n}-\btheta_0|=O_p(1/\sqrt{n})$. Denote by $\hat{\btheta}_{H_n}$ the maximizer of the quadratic function $H_n(\btheta)$. Because of the convexity of $\Omega$, it is straightforward to show that $|\hat{\btheta}_{H_n}-\btheta_0|=O_p(1/\sqrt{n})$ and hence $|\hat{\btheta}_{male,n}-\hat{\btheta}_{H_n}|=O_p(1/\sqrt{n})$. Therefore, we must have that
\be
\label{a4}
\begin{split}
\frac{2}{n}&\{\tilde{L}_n(\hat{\btheta}_{male,n})-\tilde{L}_n(\hat{\btheta}_{H_n})\}\\&=\frac{2}{n}\{\tilde{L}_n(\hat{\btheta}_{male,n})-\tilde{L}_n(\hat{\btheta}_{H_n})\}-\frac{2}{n}\{L_n(\hat{\btheta}_{male,n})-L_n(\hat{\btheta}_{H_n})\}+\frac{2}{n}\{L_n(\hat{\btheta}_{male})-L_n(\hat{\btheta}_{H})\}
\\&=\frac{2}{n}\{\tilde{L}_n(\hat{\btheta}_{male,n})-L_n(\hat{\btheta}_{male,n})\}-\frac{2}{n}\{\tilde{L}_n(\hat{\btheta}_{H_n})-L_n(\hat{\btheta}_{H_n})\}+\frac{2}{n}\{L_n(\hat{\btheta}_{male,n})-L_n(\hat{\btheta}_{H_n})\}
\\&=\frac{2}{n}\{\tilde{U}_n(\hat{\btheta}_1^*)-U_n(\hat{\btheta}_1^*)\}(\hat{\btheta}_{male,n}-\hat{\btheta}_{H_n})+H_n(\hat{\btheta}_{male,n})-H_n(\hat{\btheta}_{H_n})+R_n(\hat{\btheta}_{male,n})-R_n(\hat{\btheta}_{H_n})\\
&=H_n(\hat{\btheta}_{male,n})-H_n(\hat{\btheta}_{H_n})+o_p(1/n),
\end{split}
\ee
where $\hat{\btheta}_1^*\in (\hat{\btheta}_{male,n}, \hat{\btheta}_{H_n})$. The last equation follows by combining~(\ref{a3}) and (\ref{eq2:lema1}) in Lemma~\ref{lema1}.

Comparing the left-hand side and right-hand side of~(\ref{a4}), by definitions of $\hat{\btheta}_{male,n}$ and $\hat{\btheta}_{H_n}$, we must have $\tilde{L}_n(\hat{\btheta}_{male,n})-\tilde{L}_n(\hat{\btheta}_{H_n})\geq0$ and $H_n(\hat{\btheta}_{male,n})-H_n(\hat{\btheta}_{H_n})\leq 0$, which implies that $|H_n(\hat{\btheta}_{male,n})-H_n(\hat{\btheta}_{H_n})|=o_p(1/n)$. Furthermore, since $H_n(\btheta)$ is a quadratic function, we conclude that $|\hat{\btheta}_{male,n}-\hat{\btheta}_{H_n}|=o_p(1/\sqrt{n})$.

The rest of the proof follows from Theorem~2 in Self \& Liang (1987) and its application to cases 1 and 2 in that paper.
\end{proof}

\begin{proof}[Proof of Theorem~\ref{thm2}]
Denote $\hat{\btheta}_{male,n}^{0}=\arg \sup_{\btheta\in\Omega_0}\tilde{L}_n(\btheta)$ and $\hat{\btheta}_{male,n}=\arg \sup_{\btheta\in\Omega}\tilde{L}_n(\btheta)$. Similarly, define the true maximum likelihood estimator as $\hat{\btheta}_{mle,n}^{0}=\arg \sup_{\btheta\in\Omega_0}L_n(\btheta)$ and $\hat{\btheta}_{mle,n}=\arg \sup_{\btheta\in\Omega}L_n(\btheta)$. Then it is easy to show that
\be
\label{a5}
\begin{split}
D_n&=-2\{\tilde{L}_n(\hat{\btheta}_{male,n}^0)-\tilde{L}_n(\hat{\btheta}_{male,n})\}\\
&=\underbrace{-2\{\tilde{L}_n(\hat{\btheta}_{male,n}^0)-\tilde{L}_n(\hat{\btheta}_{male,n})\}+2\{ L_n(\hat{\btheta}_{male,n}^0)- L_n(\hat{\btheta}_{male,n})\}}_{I}\\
&\qquad\underbrace{-2\{ L_n(\hat{\btheta}_{male,n}^0)- L_n(\hat{\btheta}_{male,n})\}+2\{ L_n(\hat{\btheta}_{mle,n}^0)- L_n(\hat{\btheta}_{mle,n})\}}_{II}\\
&\qquad\underbrace{-2\{ L_n(\hat{\btheta}_{mle,n}^0)- L_n(\hat{\btheta}_{mle,n})\}}_{D_n^*}.
\end{split}
\ee
Under the null hypothesis $H_0: \btheta\in\Omega_0$, we have shown in the proof of Theorem~\ref{thm1} that $|\hat{\btheta}_{male,n}^0-\btheta_0|=O_p(1/\sqrt{n})$ and $|\hat{\btheta}_{male,n}-\btheta_0|=O_p(1/\sqrt{n})$. Hence, $|\hat{\btheta}_{male,n}^0-\hat{\btheta}_{male,n}|=O_p(1/\sqrt{n})$.  Using~(\ref{eq2:lema1}) in Lemma~\ref{lema1}, we can show that
\[
\begin{split}
I&=-2\{\tilde{L}_n(\hat{\btheta}_{male,n}^0)-L_n(\hat{\btheta}_{male,n}^0)\}+2\{ \tilde{L}_n(\hat{\btheta}_{male,n})- L_n(\hat{\btheta}_{male,n})\}\\
&=-2\{\tilde{U}_n(\hat{\btheta}_1^*)-U_n(\hat{\btheta}_1^*)\}(\hat{\btheta}_{male,n}^0-\hat{\btheta}_{male,n})=o_p(1),
\end{split}
\]
where $\hat{\btheta}_1^*\in(\hat{\btheta}_{male,n}^0, \hat{\btheta}_{male,n})$.

Lemma~1 in Self \& Liang (1987) showed that, under the null hypothesis, $H_0: \btheta\in\Omega_0$, one has $|\hat{\btheta}_{mle,n}^0-\hat{\btheta}_{H_n}^0|=o_p(1/\sqrt{n})$ and $|\hat{\btheta}_{mle,n}-\hat{\btheta}_{H_n}|=o_p(1/\sqrt{n})$. Following the proof of Theorem~\ref{thm1}, we can also show that $|\hat{\btheta}_{male,n}^0-\hat{\btheta}_{H_n}^0|=o_p(1/\sqrt{n})$ and $|\hat{\btheta}_{male,n}-\hat{\btheta}_{H_n}|=o_p(1/\sqrt{n})$. Therefore $|\hat{\btheta}_{male,n}^0-\hat{\btheta}_{mle,n}^0|=o_p(1/\sqrt{n})$ and  $|\hat{\btheta}_{male,n}-\hat{\btheta}_{mle,n}|=o_p(1/\sqrt{n})$. Then, it is easy to show that
\[
\begin{split}
II&=-2\{L_n(\hat{\btheta}_{male,n}^0)-L_n(\hat{\btheta}_{mle,n}^0)\}+2\{ L_n(\hat{\btheta}_{male,n})- L_n(\hat{\btheta}_{mle,n})\}\\
&=-2U_n^0(\hat{\btheta}_2^*)(\hat{\btheta}_{male,n}^0-\hat{\btheta}_{mle,n}^0)+2U_n(\hat{\btheta}_3^*)(\hat{\btheta}_{male,n}-\hat{\btheta}_{mle,n})=o_p(1),
\end{split}
\]
where $\hat{\btheta}_2^*\in (\hat{\btheta}_{male,n}^0,\hat{\btheta}_{mle,n}^0)$ and $\hat{\btheta}_3^*\in (\hat{\btheta}_{male,n},\hat{\btheta}_{mle,n})$ and $U_n^0(\hat{\btheta}_2^*)$ is the partial derivative function of $L_n(\btheta)$ in the restricted parameter space $\Omega_0$. The last equation follows from the fact $1/\sqrt{n}U_n^0(\btheta)$ and $\/\sqrt{n}U_n(\btheta)$ are bounded for $\btheta$ in a ``small" neighborhood of $\hat{\btheta}_{mle,n}^0$ and $\hat{\btheta}_{mle,n}$, respectively.

By plugging $I$ and $II$ back into equation~(\ref{a3}), we have $D_n=D_n^*+o_p(1)$, which implies that $D_n$ and $D_n^*$ have the same asymptotic distribution. The distribution of $D_n^*$ was derived in Self \& Liang (1987). The rest of the proof follows from Theorem 3 in Self \& Liang (1987) as well as its application to cases 4-6 in that paper.
\end{proof}

\pagestyle{empty}
\section*{References}
\baselineskip 15pt

\refmark Bolker, B. M. (2008),
{\it Ecological Models and Data in R}, Princeton University Press.

\refmark Byrd, R. H., Lu, P., Nocedal, J. \& Zhu, C. (1995), ``A limited memory algorithm for bound constrained optimization," {\it SIAM Journal of Scientific Computing}, 16, 1190-1208.

\refmark Corwin, E. \& Logar, A. (2004),
``Sorting in linear time variations on the bucket sort,''
{\it Journal of Computing Sciences in Colleges}, 20, 197-202.

\refmark Degen, M., Embrechts, P. \& Lambrigger, D. D. (2007),
``The quantitative modeling of operational risk: between g-and-h and EVT,"
{\it Astin Bulletin}, 37, 265-291.

\refmark Dutta, K. K. \& Babbel, D. F. (2002),
``On measuring skewness and kurtosis in short rate distributions:
The case of the US dollar London inter bank offer rates,''1
Technical report, The Wharton School, University of Pennsylvania.

\refmark Field, C. (2004),
``Using the $gh$ distribution to model extreme wind speeds,''
{\it Journal of Statistical Planning and Inference}, 122, 15-22.

\refmark Field, C. \& Genton, M. G. (2006),
``The multivariate $g$-and-$h$ distribution,''
{\it Technometrics}, 48, 104-111.

\refmark He, Y. \& Raghunathan, T. E. (2006),
``Tukey's gh distribution for multiple imputation,''
{\it The American Statistician}, 60, 251-256.

\refmark He, Y. \& Raghunathan, T. E. (2012),
``Multiple imputation using multivariate gh transformations,''
{\it Journal of Applied Statistics}, 39, 2177-2198.

\refmark Hoaglin, D. C. (1985),
``Summarizing shape numerically: The $g$-and-$h$ distributions,''
in {\it Data Analysis for Tables, Trends and Shapes: Robust and
Exploratory Techniques}, D. C. Hoaglin, F. Mosteller, J. W. Tukey (eds.), New York: Wiley.

\refmark  Hoaglin, D. C. (2010),
``Extreme-value distributions as $g$-and-$h$ distributions: An empirical view,''
{\it Technical report}, JSM 2010.

\refmark Hoaglin, D. C. \& Peters, S. C. (1979),
``Software for exploring distributional shapes,''
in {\it Proceedings of Computer Science and Statistics: 12th annual Symposium on the Interface},
J. F. Gentleman (ed.), University of Waterloo, Ontario, 418-443.

\refmark Jim\'{e}nez, J. A. \& Arunachalam, V. (2011),
``Using Tukey's g and h family of distributions to calculate value-at-risk and conditional value-at-risk,''
{\it Journal of Risk}, 13, 95-116.

\refmark Jones, M. C. (2015),
``On families of distributions with shape parameters (with discussions),''
{\it International Statistical Review}, to appear.

\refmark Ledolter, J. \& Hogg, R. V. (2009),
{\it Applied Statistics for Engineers and Physical Scientists}, 3rd Ed., Pearson/Prentice Hall.

\refmark MacGillivray, H. L. (1992),
``Shape properties of the g-and-h and Johnson families,''
{\it Communications in Statistics - Theory and Methods}, 21, 1233-1250.

\refmark Martinez, J. \& Iglewicz, B. (1984),
``Some properties of the Tukey $g$ and $h$ family of distributions,''
{\it Communications in Statistics, Theory and Methods}, 13, 353-369.

\refmark Morgenthaler, S. \& Tukey, J. W. (2000),
``Fitting quantiles: doubling, {\it HR}, {\it HQ}, and {\it HHH} distributions,''
{\it Journal of Computational and Graphical Statistics}, 9, 180-195.

\refmark R Development Core Team (2014), R: A language and environment for statistical computing.
R Foundation for Statistical Computing, Vienna, Austria. ISBN 3-900051-07-0, URL
http://www.R-project.org.

\refmark Rayner, G. D. \& MacGillivray, H. L. (2002a),
``Numerical maximum likelihood estimation for the g-and-k and generalized g-and-h distributions,''
{\it Statistics and Computing}, 12, 57-75.

\refmark Rayner, G. D. \& MacGillivray, H. L. (2002b),
``Weighted quantile-based estimation for a class of transformation distributions,''
{\it Computational Statistics and Data Analysis}, 39, 401-433.

\refmark Self, S. G. \& Liang, K. Y. (1987),
``Asymptotic properties of maximum likelihood estimators and likelihood ratio tests under nonstandard conditions,''
{\it Journal of the American Statistical Association}, 82, 605-610.

\refmark Tukey, J. W. (1977), {\it Modern techniques in data analysis}, NSF-sponsored regional research conference at Southeastern
Massachusetts University, North Dartmouth, MA.

\refmark Xu, G. \& Wang, S. (2010),
``A goodness-of-fit test of logistic regression models for
case-control data with measurement errors,''
{\it Biometrika}, 98, 877-886.

\refmark Xu, G., Wang, S., Huang, J.Z. (2014), ``Focused information criterion and model averaging based on weighted composite quantile regression." {\it Scandinavian Journal of Statistics}, 41, 365-381.

\refmark Xu, Y., Iglewicz, B. \& Chervoneva, I. (2014),
``Robust estimation of the parameters of g-and-h distributions, with applications to outlier detection,''
{\it Computational Statistics and Data Analysis}, 75, 66-80.

\refmark Xu, G., Xiang, Y. B., Wang, S. \& Lin, Z. Y. (2012) ``Regularization and variable selection for infinite variance autoregressive models," {\it Journal of Statistical Planning and Inference}, 142, 2545-2553.

\refmark Zhang, B. (2001),
``An information matrix test for logistic regression models based on case-control data,''
{\it Biometrika}, 88, 921-932.
\end{document}